\documentclass[11pt,a4paper]{article}
\usepackage{a4wide}
\usepackage{amsmath}
\usepackage{amssymb}
\usepackage{amsthm}
\usepackage{graphicx}
\usepackage{color}
\usepackage{float}
\usepackage{nicefrac}
\usepackage{authblk}
\usepackage{cite}
\usepackage{color}
\usepackage{xcolor}

\theoremstyle{plain} \newtheorem{claim}{Claim}
\theoremstyle{plain} \newtheorem{lemma}{Lemma}
\theoremstyle{plain} \newtheorem{theorem}{Theorem}
\theoremstyle{plain} \newtheorem{definition}{Definition}
\theoremstyle{plain} 
\theoremstyle{plain} 
\theoremstyle{remark} 
\theoremstyle{remark}

\floatstyle{ruled}
\newfloat{algorithm}{htbp}{loa}
\floatname{algorithm}{Algorithm}

\title{\textsc{\textbf{Integrality Gaps for Colorful Matchings}}}

\date{}

\author[1]{Steven Kelk \thanks{steven.kelk@maastrichtuniversity.nl}}
\author[1]{Georgios Stamoulis \thanks{georgios.stamoulis@maastrichtuniversity.nl}}
\affil[1]{Department of Data Science and Knowledge Engineering (DKE), Maastricht University, The Netherlands}

\begin{document}

\maketitle

\begin{abstract}
We study the integrality gap of the natural linear programming relaxation for the \textit{Bounded Color Matching} (BCM) problem. We provide several families of instances and establish lower bounds on their integrality gaps and we study how the Sherali-Adams ``lift-and-project" technique behaves on these instances. We complement these results by showing that if we exclude certain simple sub-structures from our input graphs, then the integrality gap of the natural linear formulation strictly improves. To prove this, we adapt for our purposes the results of F\"{u}redi [\emph{Combinatorica}, 1(2):155-162, 1981]. We further leverage this to show upper bounds on the performance of the Sherali-Adams hierarchy when applied to the natural LP relaxation of the BCM problem.
\end{abstract}

\section{Introduction And Problem Definition}
In 1982, Papadimitriou \& Yannakakis defined the \textit{Exact Matching} (EM) problem \cite{DBLP:journals/jacm/PapadimitriouY82}: Given a bipartite graph $B$ with some edges painted red, does $B$ contain a perfect matching with \textit{exactly} $k \in \mathbb{Z}^+$ red edges? This is one of the very few problems whose complexity is not yet fully understood. On one hand, there exists an exact polynomial time randomized \textbf{NC} algorithm by Mulmuley and U. \& V. Vazirani  \cite{DBLP:journals/combinatorica/MulmuleyVV87} which suggests that EM is probably not \textbf{NP}-complete.  Moreover, Yuster \cite{DBLP:journals/algorithmica/Yuster12} showed that there exists an algorithm which, in polynomial time, either correctly decides that there is no maximum matching with exactly $k$ red edges or returns a matching of cardinality at most $\mu(G)-1$ with exactly $k$ red edges,  where $\mu(G)$ is the matching number of the input graph $G$ i.e., the maximum cardinality matching in $G$. This  result puts EM as close to \textbf{P} as possible (unless of course EM $\in \mathbf{P}$). The problem was also studied in some restricted classes, for example in complete and complete bipartite graphs, see Karzanov and Yi, Murty \& Spera  \cite{Karzanov1987,Yi2002261} respectively. Still, the exact complexity of the problem remains unknown and this has prompted researchers to investigate meaningful related cases of the Exact Matching problem.

Here we consider the following very natural generalization of the EM problem:

\begin{definition}[Bounded Color Matching-BCM]
We are given a (simple, undirected) graph $G=(V,E)$. The edge set $E$ is partitioned into $k$ sets $E_1 \cup  \cdots \cup E_k$ i.e., every edge $e$ belongs to color class $C_j$ if $e \in E_j$ and has a profit $p_e$ $\in \mathbb{Q}^+$. By abusing notation slightly, we will say that edge $e$ ``has" color $C_j$ if $e \in E_j$. Let $\mathcal{C} = \cup_{i=1, \dots, k} C_i$ be the collection of all color classes. Each color class $C_j$ is associated with a positive number $w_j \geq 1$. Our goal is to find a maximum (weighted) matching $M$ that contains at most $w_j$ edges of color $C_j$ i.e., a matching $M$ such that $|M \cap E_j|  \leq w_j, ~\forall C_j \in \mathcal{C}$.
\end{definition}

In  \cite{DBLP:conf/mfcs/Stamoulis14} an LP-based approximation algorithm with approximation ratio 1/2 was given for the BCM problem, which matches the \emph{integrality gap} of the natural LP relaxation for this problem. The algorithm is based on the elegant technique  by Parekh \cite{DBLP:conf/ipco/Parekh11} which gives an inductive process to write any basic feasible solution of the relaxed LP as an approximate \textit{sparse} convex combination of integral solutions.  The result holds for any bounds $w_j \geq 1$, integral or otherwise, since the analysis does not make use of the fact that $w_i \in \mathbb{Z}^+, \forall i$, only the fact that $w_i \geq 1$ (otherwise the integrality gap could be unbounded). It has been further generalized by Parekh and Pritchard \cite{DBLP:conf/waoa/ParekhP14} to uniform hypergraphs.

A very natural question occurs: a negative result based on a bad integrality gap instance rules out the possibility of a good relaxation-based approximation algorithm. But this holds only for the \emph{particular} relaxation that we use. What about other, more complicated and sophisticated relaxations? As an illustrative example, if we take the normal (degree-constrained) relaxation for the classical matching problem, which has integrality gap of $\nicefrac{3}{2}$, and enhance it with the \textit{blossom inequalities}, we get an exact formulation of the convex hull of all integer points for the matching problem \cite{Edmonds1965b}.

Given the apparent difficulty of identifying stronger/tighter linear relaxations for combinatorial optimization problems, a large body of work has been dedicated in recent years to identifying systematic techniques to enhance the quality of a given linear (or semi-definite) program with \textit{valid inequalities} (inequalities that are satisfied by all \textit{integral} points). The hope is that the part of the polyhedron responsible for the bad integrality gap example will be eliminated.  Many such ``lift and project" methods have been proposed so far, in particular by Sherali and Adams (SA) \cite{DBLP:journals/siamdm/SheraliA90}, by Lov\'asz and Schrijver (LS) \cite{Lovasz91conesof}, by Balas, Ceria and Cornu\'ejols  (BCC) \cite{DBLP:journals/mp/BalasCC93}, by  Lasserre \cite{DBLP:journals/siamjo/Lasserre02} and by Bienstock and Zuckerberg (BZ) \cite{DBLP:journals/siamjo/BienstockZ04}.  For a very thorough and readable comparison of the first three such hierarchies see \cite{DBLP:journals/mor/Laurent03}. Their use in approximation algorithms was initiated by the seminal work of Arora, Bollob\'{a}s, Lov\'{a}sz and Tourlakis \cite{DBLP:journals/toc/AroraBLT06}.

The general idea has the following pattern: Let $P_0 = \{ \mathbf{x} \in \{0,1\}^n:\mathbf{Ax} \leq \mathbf{b} \}$, $\mathbf{A} \in \mathbb{R}^{m \times n}, \mathbf{b}   \in \mathbb{R}^m$ be an initial integral polyhedron in $n$-th dimensional space and let $F_0$ be the corresponding relaxation i.e., $F_0 = \{\mathbf{x} \in  [0,1]^n: \mathbf{Ax} \leq \mathbf{b}\}$.  Starting from $F_0$ we operate in \emph{rounds} (also called \emph{levels}), and in each round new variables are added and a specific set of valid linear or semi-definite inequalities is added (the lifting phase) and then the lifted polyhedron is projected back to the original space (projection phase). Thus we obtain a \textit{hierarchy} of tighter formulations $F_k \subseteq F_{k-1} \subseteq \cdots \subseteq F_0$ of $F_0$ such that for each $0 < j \leq n$, $F_j$ is obtained from $F_{j-1}$. An important feature of this sequence is that we can efficiently optimize any linear (or semi-definite) objective function over $F_t$ for any fixed $t$ and, moreover, after at most $n$ rounds we have that $F_n = P_0 = conv(F_0 \cap \{0,1\}^n)$. That is, this progressively tighter sequence of relaxations converges to the convex hull of the integral solutions.

From the point of view of approximation algorithms, the first ``few" rounds of such hierarchies (constant or poly-logarithmic) are particularly interesting, especially for problems for which the gap between the current best approximation algorithm and the complexity theoretic inapproximability bound is large enough; the hope is that better (quasi-)polynomial algorithms can be designed. The effect of such methods has been extensively studied for a host of combinatorial optimization problems, for  example see \cite{DBLP:conf/stoc/MathieuS09,
DBLP:conf/ipco/KarlinMN11, DBLP:journals/mp/ChanL12, DBLP:journals/mp/CheriyanGGS16, DBLP:conf/wads/ChlamtacFG13,DBLP:conf/ipco/KurpiszMMMVW16,
DBLP:conf/approx/ChlamtacS08} and the references therein. In many cases such hierarchies fail to generate polytopes with better integrality gaps (after a few rounds) but there are some notable results where the current best approximation algorithms are known to be either consistent with
few rounds of some hierarchy or produce even better approximability results not achievable by other techniques. See, for example, \cite{DBLP:conf/stoc/AroraC06, DBLP:conf/approx/ChlamtacS08, DBLP:conf/ipco/KarlinMN11, DBLP:journals/mp/ChanL12, DBLP:conf/innovations/YoshidaZ14, DBLP:conf/icalp/ManurangsiR17, DBLP:conf/stoc/BateniCG09, DBLP:conf/stoc/BansalSS16} for some important works in that direction.

Of particular interest in our paper is the \textit{Sherali-Adams} (SA) Hierarchy, which we formally define in a  subsequent section. This is a very well-known and commonly-used
``lift-and-project" method in combinatorial optimization and has produced a host of positive results. See \cite{DBLP:conf/approx/MagenM09} for Vertex Cover in Planar graphs,
\cite{DBLP:conf/soda/VegaK07} for Max-Cut in dense graph instances, \cite{DBLP:conf/stoc/BateniCG09} for  Max-Min Fair Allocations,
\cite{DBLP:conf/innovations/YoshidaZ14, DBLP:conf/icalp/ManurangsiR17} for dense instances of CSPs and \cite{DBLP:conf/approx/ChlamtacKR10, DBLP:conf/stoc/GuptaTW13} for Sparsest Cuts in
bounded treewidth graphs. In contrast to these positive results, we will show that the Sherali-Adams hierarchy  is not so successful for the problems considered in this paper.

\smallskip
\noindent \textbf{\textsl{Our Contribution:}} We study to what extent formulations generated by the Sherali-Adams hierarchy can improve the integrality gap of the natural LP formulation for BCM. As a first step, we show that if we allow the bounds $w_j$ on the color classes to be fractional numbers (greater than or equal to 1, otherwise the integrality gap can be unbounded), then there exists a family of instances for the BCM problem such that even a sub-exponential number of rounds of the Sherali-Adams hiearchy does not suffice to reduce the integrality gap of 2. Similar bounds and instances (uniform lengths/sizes, fractional bounds/capacities) have been used in the integrality gap study of the Knapsack problem \cite{DBLP:conf/ipco/KarlinMN11}. This demonstrates a severe limitation of a more general computational model, i.e., even large families of large linear programs cannot ``realize" such relatively simple structured instances.

Given that the previous result for the Sherali-Adams hierarchy uses instances that have \emph{fractional} bounds, and because this might seem somewhat artificial, we next explore whether these bounds are inherently necessary. That is,  we would like to answer the following question: \textit{are there instances with integral color bounds $w_j$ with integrality gap of 2 and how does the Sherali-Adams hierarchy behave on them?} In that direction, we provide two extra families of integrality gap instances. First, a family of instances with integrality gap of 2 which resist only a constant number of Sherali-Adams rounds, and another family with integrality gap $\nicefrac{k}{k-1}$ for integer parameter $k$ which, in contrast, is preserved for a sub-exponential number of Sherali-Adams rounds. In order to show strong integrality gap properties for the Sherali-Adams hierarchy it is required that the instances have certain special properties (such as large degree and large cardinalities of the color classes).

This motivates the second part of our paper: there, we show that if we exclude a certain simple sub-structure (called \emph{truncated projective plane of order two}, i.e., an alternating bi-chromatic cycle on four vertices) then the integrality gap immediately improves. This means that every instance with integrality gap of 2 should have many disjoint copies of these sub-graphs. These bi-chromatic cycles, which cause the large integrality gap, can be recognized very quickly by the Sherali-Adams hierarchy i.e., only few rounds of this hierarchy are enough to lower the integrality gap below 2. For completeness we include the simple proof of this fact.

The combined results of our paper demonstrate that (i) the only instances of integrality gap 2 that resist a large (non-constant) number of rounds of the Sherali-Adams hierarchy, are instances with fractional bounds on the color classes and (ii) when we deal only with integral bounds, 2 rounds of the Sherali-Adams hierarchy suffice to reduce the integrality gap of the natural LP relaxation of the BCM problem. Our proofs are non-algorithmic and, although they were inspired by the results of F\"{u}redi \cite{DBLP:journals/combinatorica/Furedi81} (which were also used  by Chan \& Lau \cite{DBLP:journals/mp/ChanL12}), the technicalities involved make the arguments highly non-trivial. It remains a very interesting open problem to exploit this result algorithmically and this is a point that we will elaborate later in our manuscript.

\smallskip \noindent \textbf{\textsl{Related Work:}} To the best of our knowledge, the first time such a generalization of the EM problem was studied, at least from an approximation point of view, was in \cite{DBLP:conf/mfcs/NomikosPZ07} where the so-called \textit{blue-red} matching problem was studied: find a maximum cardinality matching with at most $w \in \mathbb{Z}^+$ red and at most $w$ blue edges. Besides the theoretical relevance, their  motivation was that  this can be used to approximately solve the \emph{Directed Maximum Routing and Wavelength Assignment} problem (DirMRWA) \cite{DBLP:conf/infocom/NomikosPZ03} in \textit{rings} which is a fundamental network topology \cite{DBLP:conf/mfcs/NomikosPZ07,DBLP:journals/siamdm/Caragiannis09}. They provided an $\mathbf{RNC^2}$ algorithm and a $\nicefrac{3}{4}$-approximation combinatorial algorithm noticing also that the greedy procedure produces a $\nicefrac{1}{2}$-approximate solution. The exact complexity of this problem remains open.

The BCM problem has appeared in the literature under many different names. In \cite{garey_johnson} ([GT55]) it was defined as \textit{Multiple Choice Matching} and was claimed to be \textbf{NP}-hard citing \cite{DBLP:journals/jacm/ItaiRT78}. Unfortunately, the results of \cite{DBLP:journals/jacm/ItaiRT78} do not prove this claim since the color classes do not form a partition of the edge set. This was acknowledged in \cite{DBLP:journals/dam/Rusu08} where it was shown, amongst other interesting results, that the problem is indeed \textbf{NP}-hard even on 3-regular bipartite graphs. BCM is also known as the \textit{Rainbow Matching} problem \cite{DBLP:journals/dam/Zaker07, DBLP:journals/jct/Woolbright78} when $w_j = 1, \forall j$.  In \cite{DBLP:journals/tcs/LeP14} a host of complexity results are given. Among these, it is shown that Rainbow Matching is hard to approximate within a factor better than $\nicefrac{139}{140}$ even in complete graphs and this trivially carries over to the BCM problem. Some graph classes where it is solvable in polynomial time were also identified.

Finally, the BCM problem can be recast as a problem of maximizing a linear function subject to a matching constraint and a partition matroid constraint which enforces that at most $w_j$ elements can be chosen from $C_j$. As a consequence, the greedy algorithm immediately gives a $\nicefrac{1}{3}$-approximation and this is tight i.e., there are simple instances where the greedy achieves exactly this ratio, see \cite{DBLP:conf/esa/Mestre06, tcs_matching_2012}.

\section{Technical Preliminaries}

Here we will define the natural linear programming formulation of the problem and we will comment on its properties with respect to its integrality gap. The purpose of the subsequent sections is to provide families of integrality gap instances and a study of the behavior of the Sherali-Adams hierarchy on them. We will give the standard definition of the Sherali-Adams hierarchy.

\medskip
For any vertex $v$ of a graph $G$ with edge set $E(G)$ let $\delta(v)= \{e \in E(G): v \in e\}$ i.e., the set of the edges incident to $v$. For a given instance of the BCM problem we can describe the set of all feasible solutions as follows.
\begin{eqnarray}
\label{bcm_polyhedron}
\mathcal{M}_c = \Bigg\{ \mathbf{x} \in \{0,1\}^{E} :~ \mathbf{x} \in \mathcal{M} \mbox{ and } \sum_{e \in E_j} x_e \leq w_j, ~\forall j  \in [k]    \Bigg\}
\end{eqnarray}

\noindent
where $\mathcal{M}$ is the usual (degree-constrained) matching polytope: $\mathcal{M} = \{ \mathbf{y} \in \{0,1\}^{E}: \sum_{e\in \delta(v)} y_e \leq 1, \forall v \in V(G) \}$. We call the additional constraints color constraints. We want to find the maximum profit solution vector $\mathbf{x}$ (that maximizes $\mathbf{p}^T \mathbf{x}$) such that $\mathbf{x} \in \mathcal{M}_c$. As usual, we relax the integrality constraints $\textbf{x} \in \{ 0,1\}^{E}$ to $\mathbf{x} \in [0,1]^{E}$ and we solve the corresponding linear relaxation efficiently to obtain a \textit{fractional}  vector $\mathbf{x} \in [0,1]^E$. It is not hard to show that the \textit{integrality gap} of $\mathcal{M}_c$ is 2 and this is true even if we add the blossom inequalities i.e., if instead of $\mathcal{M}$ as defined here, we use the well known Edmond's LP \cite{Edmonds1965b}.

Given an integral polyhedron $I$ for a maximization problem and its linear relaxation $L$ the integrality gap of $L$ is the maximum ratio of the optimal fractional solution over the optimal integral one, ranging over all possible instances. Linear relaxations that always have integral optimal solutions have integrality gap equal to 1. An LP formulation with integrality gap of $\varrho$ implies that it is impossible to design an approximation algorithm with performance guarantee better than $\varrho$ using this particular formulation as upper/lower bounding schema for our discrete optimization problem.

\smallskip
\noindent{\textbf{\textsl{The Sherali Adams Hierarchy:}}}
We recall the definition of the SA hierarchy of progressively stronger  relaxations of an integer polyhedron  in the $n$-dimensional hypercube $\{0,1\}^n$. We use the original
definition \cite{DBLP:journals/siamdm/SheraliA90}.

Let $F_0 = \{\mathbf{x} \in [0,1]^n : \mathbf{a_i}^T \mathbf{x} = \sum_{j \in [n]} a_{ij} x_j \leq b_i,  \forall i \in [m]  \}$ with $a_{ij}, b_i \in \mathbb{Q}$, $\forall i \in
[m], j \in [n]$ be an initial convex polyhedron in $[0,1]^n$ . Let $\mathcal{I} = \mathsf{conv} (F_0 \cap \{  0,1 \}^n)$ be the convex hull of all integer points of $F_0$. The
SA hierarchy, starting from $F_0$, constructs a hierarchy of progressively \textit{non-weaker} relaxations $F_1, F_2, \dots$ of $\mathcal{I}$ in the sense that $F_n
\subseteq F_{n-1} \subseteq \cdots \subseteq F_0$. Let $F_{\psi}$ be the polyhedron resulting after $\psi$ iterations of the SA methods applied initially to $F_0$. After at
most $n$ rounds we will arrive at $\mathcal{I}$ i.e., $F_n = \mathcal{I}$. Sometimes $n$ rounds are necessary in order to arrive at $\mathcal{I}$.  At the $\psi$-th iteration,
$\psi \geq 1$, the SA hierarchy obtains $F_{\psi}$ from $F_{\psi-1}$ (in fact, from $F_0$) as follows: For all disjoint subsets $\Gamma,\Delta$ of $[n]$ such that
$|\Gamma|+|\Delta| \leq \psi$:

\begin{description}
\item[\textsf{SA}-1] For each constraint $b_i - \mathbf{a_i}^T\mathbf{x}\geq 0$  add the constraint $$(b_i - \mathbf{a_{i}}^T\mathbf{x}) \prod_{\gamma \in \Gamma}x_{\gamma} \prod_{\delta \in
    \Delta}(1-x_{\delta}) \geq 0.$$
\item[\textsf{SA}-2]  Add all the constraints  $\prod_{\gamma \in \Gamma}x_{\gamma} \prod_{\delta \in \Delta}(1-x_{\delta}) \geq 0$.
\item[\textsf{SA}-3] Expand all the polynomial constraints described by SA-1 and SA-2:
\begin{itemize}
\item[(1)] Replace each term of the form $x_i^2$ by $y_{\{i\}}$,
\item[(2)]  Replace each product of monomials $\prod_{\zeta \in Z} x_{\zeta}$, defined by a set of variable indices $Z \subseteq \{1, \dots, n \}$, by a new variable $y_{Z}$.
\end{itemize}
\item[\textsf{SA}-4] Let $F^l_{\psi}$ be the resulting lifted polyhedron. Project $F^l_{\psi}$ onto the original $n$-th dimensional space by eliminating all $y_Z$ variables for which $|Z| \geq 2$:
$$F_{\psi} = \{\mathbf{x} \in [0,1]^{n}: \exists \mathbf{y} \in F_{\psi}^l, y_{\{i\}} = x_i,~ \forall i\}.$$
\end{description}

In case both $\Gamma, \Delta$ are empty, the corresponding term is simplified to $y_{\emptyset}$. The size of the lifted program after $t$ rounds is $\mathcal{O}(\sum_{i=1}^t \binom{n}{i})$. We note that the effect of the SA hierarchy on the usual matching polytope was fully studied in \cite{DBLP:conf/stoc/MathieuS09}. See also \cite{DBLP:conf/ipco/AuT11, DBLP:journals/mor/StephenT99, DBLP:journals/dam/AguileraBN04} for other relevant results regarding the performance of various lift-and-project methods on the matching polytope.

\section{Integrality Gaps for the Sherali-Adams Hierarchy}
We will show that the integrality gap of $\mathcal{M}_c$ resists an asymptotically linear number of rounds of the (SA) hierarchy by providing a particular family of graphs
and a feasible solution for the $\psi$-th level of the SA hierarchy with high fractional value with respect to the optimal integral value. We first provide our
integrality gap example.

\subsection{A family $\mathcal{F}$ of integrality gap instances}

For a given graph $G=(V,E)$, an \textit{edge coloring} of $G$ is a function $c: E \rightarrow \{1, \dots, k\}$ such that $c(e_1) \neq c(e_2)$ whenever $e_1 \cap e_2 \neq \emptyset$ (i.e., share a common endpoint). The \textit{edge chromatic number} (also known as \emph{chromatic index}) of a graph $G$ is the smallest positive integer  $k$ for which an edge coloring exists, and it is denoted by $\chi'(G)$. For any $G$, let $\Delta(G) = \max_{v \in V(G)} |\delta(v)|$. In a classical result, Vizing \cite{vizing} showed that $\Delta(G) \leq \chi'(G) \leq \Delta(G)+1$. For bipartite graphs a stronger statement holds:

\begin{theorem}[\cite{Konig1916}]
If $G$ is bipartite, then $\chi'(G) = \Delta(G)$.
\end{theorem}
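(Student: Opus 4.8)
The plan is to use the easy lower bound $\Delta(G) \le \chi'(G)$ for free (the $\Delta(G)$ edges incident to a maximum-degree vertex pairwise intersect and hence need distinct colors; this is in any case already guaranteed by the cited Vizing bound $\Delta(G)\le\chi'(G)\le\Delta(G)+1$), and to spend all the effort on the matching upper bound $\chi'(G) \le \Delta(G)$. I would prove the latter by induction on $|E(G)|$, writing $\Delta = \Delta(G)$ and maintaining a proper edge coloring that uses only the color set $\{1,\dots,\Delta\}$.

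For the inductive step I would pick an edge $uv$, delete it, and color $G - uv$ properly with $\Delta$ colors by the induction hypothesis (note $\Delta(G-uv) \le \Delta$). Since $u$ and $v$ each have degree at most $\Delta - 1$ in $G - uv$, some color $a$ is absent at $u$ and some color $b$ is absent at $v$. If $a = b$ we color $uv$ with it and are done, so assume $a \neq b$. I would then consider the Kempe chain $P$, namely the maximal path starting at $u$ whose edges alternate between colors $a$ and $b$; because $a$ is missing at $u$, this path must leave $u$ along a $b$-edge.

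The hard part — and the only place where bipartiteness is used — is arguing that $P$ does not reach $v$. If $v$ were an interior vertex of $P$ it would be incident to both an $a$-edge and a $b$-edge, contradicting that $b$ is missing at $v$; so $v$ could only be the far endpoint of $P$. In that case the last edge of $P$ at $v$ would have to be colored $a$ (again because no $b$-edge meets $v$), so the edges of $P$ would read $b, a, b, a, \dots, a$, making $P$ a path of even length and forcing $u$ and $v$ onto the same side of the bipartition. But $uv \in E(G)$ puts them on opposite sides, a contradiction, so $v \notin P$. Swapping colors $a$ and $b$ along $P$ (which preserves properness) then makes color $b$ absent at $u$ while leaving the colors at $v$ untouched, so $b$ is absent at both endpoints and we may color $uv$ with $b$, completing the induction.

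I expect the parity argument forbidding $P$ from terminating at $v$ to be the crux; everything else is the standard Kempe-chain recoloring that also underlies Vizing's $\Delta+1$ bound. As a backup I would keep an alternative route that avoids alternating paths: embed $G$ into a $\Delta$-regular bipartite supergraph $G'$ (pad the two sides to equal size and greedily add edges until every vertex has degree exactly $\Delta$, which is always feasible in the bipartite setting), then repeatedly extract a perfect matching via Hall's theorem to decompose $G'$ into $\Delta$ perfect matchings; reading each matching as a color class and restricting to $E(G)$ again uses only $\Delta$ colors.
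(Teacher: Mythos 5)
Your argument is correct. The paper itself gives no proof of this statement---it is quoted as a classical result of K\H{o}nig with only a citation---so there is nothing internal to compare against; what you have written is the standard textbook proof of K\H{o}nig's edge-colouring theorem. The induction on $|E(G)|$, the choice of colours $a$ missing at $u$ and $b$ missing at $v$, and the Kempe-chain swap are all sound, and you correctly isolate the one place bipartiteness enters: the $b,a,b,\dots,a$ alternating path from $u$ would have even length if it terminated at $v$, placing $u$ and $v$ on the same side, contradicting $uv\in E(G)$. (The only point worth making explicit is that the component of $u$ in the subgraph of edges coloured $a$ or $b$ is indeed a path, because every vertex has degree at most $2$ there and $u$ has degree at most $1$ since $a$ is absent at $u$; this is what justifies calling $P$ a maximal path and guarantees the swap preserves properness.) Your backup route via padding to a $\Delta$-regular bipartite supergraph and peeling off perfect matchings with Hall's theorem is also a complete and correct alternative.
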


Our starting point will be the $\ell$-dimensional hypercube graph $Q_{\ell}$: $Q_{\ell}$ can be constructed inductively from the disjoint union of the two hypercubes $Q_{\ell-1}$, by adding an edge from each vertex in one copy of $Q_{\ell-1}$ to the corresponding vertex in the other copy. The joining edges form a perfect matching. $Q_{\ell}$ has $2^{\ell}$ vertices and $\ell \cdot 2^{\ell-1}$ edges. More importantly, every hypercube graph is a uniform  bipartite graph of degree $\ell$ and thus, applying the result of \cite{Konig1916} we conclude that the edge chromatic number $\chi'$ of $Q_{\ell}$ is precisely $\ell$, the degree of each vertex in $V(Q_{\ell})$. In other words, we can edge-color the edges of $Q_{\ell}$ with $\ell$ colors such that all edges adjacent to any vertex receive distinct colors. Trivially, each color class $C_j, j \in  [\ell]$ contains $2^{\ell-1}$ edges. We set the bound $w_j = 2(1-\epsilon)$ for each color class, for some $\epsilon > 0$. Then, the maximum \textit{integral} matching contains $\ell$ edges (one edge per color class) whereas, by setting the values of the variables (corresponding to edges) to $\frac{1}{2^{\ell-2}} - \epsilon$, the maximum \textit{fractional} matching (solution to the LP relaxation $\mathcal{M}_c$) has value $\approx \frac{\ell \cdot 2^{\ell-1}}{2^{\ell-2}} = 2\ell$. This is indeed a feasible solution since (1) $\ell (\frac{1}{2^{\ell-2}}- \epsilon) \leq 1$ and (2) $2^{\ell-1} (\frac{1}{2^{\ell-2}} - \epsilon) \leq 2(1-\epsilon)$ i.e., it satisfies both degree and the color bound constraints.

\medskip
Let $\mathcal{F}$ be the family of all graphs constructed as
above. Observe that these  particular instances are ``easy" from an algorithmic point of view: indeed,  the first \emph{Chv\'{a}tal Closure} of $\mathcal{M}_c$ closes the gap. We
remind that the first Chv\'{a}tal closure of a polyhedron $P = \{ \mathbf{x} \in \mathbb{R}^n: \mathbf{Ax} \leq \mathbf{b} \}$ for $\mathbf{A} \in \mathbb{Q}^{m \times n}$ and $\mathbf{b} \in \mathbb{Q}^m$, is defined as
$$
P^c = \Bigg\{ \mathbf{x} \in \mathbb{R}^n: \mathbf{Ax} \leq \mathbf{b}, \sum_{j=1}^n(\lfloor \mathbf{u}^T \mathbf{A_j}\rfloor)x_j \leq \lfloor \mathbf{u}^T \mathbf{b} \rfloor, \forall \mathbf{u} \in \mathbb{R}^m  \Bigg\}.
$$
I.e., if we apply the first Chv\'{a}tal closure to $\mathcal{M}_c$ with $\mathbf{u} = (1,1,\dots, 1)^T$ vector for a graph $G \in \mathcal{F}$, then the integrality gap vanishes. On the
other hand, this closure alone is not enough to close the integrality gap on any \textit{arbitrary} instance: take the size four cycle with alternating edges from $E_1, E_2$ and
set $\beta_j = 2(1-\epsilon)$. Then, the first Chv\'{a}tal closure will set $b_j' = 1$ which has integrality gap again $2$ whereas two rounds of the (SA) are enough to eliminate this gap.
This shows that the two operators are incomparable, at least with respect with $\mathcal{M}_c$.

\smallskip
\noindent{\textbf{\textsl{The effect of SA on the family $\mathcal{F}$:}}} Let $G \in \mathcal{F}$ be any graph constructed as in the previous subsection for some $\ell$. Given such a $G$, we will define an appropriate fractional solution vector $\mathbf{y}$ and we will prove that $\mathbf{y}$ is feasible for the $\psi$-th level of the Sherali-Adams hierachy, for
\emph{any} $\psi = o(2^{\ell})$.
Then we will see that this proposed vector has fractional value twice as large as the optimal integral solution.

Now define the vector  $\mathbf{y} \in F_{\psi}$  in $[0,1]^{\eta}$, $\eta = \sum_{q \in [\psi]}\binom{n}{q}$ as follows:
\begin{equation*}
\mathbf{y} = \left\{
\begin{array}{lclr}
~y_{ \emptyset} & = & 1 &\\
~y_{\{ e \}} & = & \frac{1-\epsilon}{2^{\ell-2}+\psi(1-\epsilon)} (= \rho), & \forall e \in E(G) \\
~y_{I} & = & 0, & \forall I \subseteq [n], |I| \geq 2
\end{array} \right.
\end{equation*}

We would like to show that this proposed vector is valid (feasible) for the $\psi$-th level of the SA hierarchy. In order to prove that, we need to prove that it satisfies all the
constraints of the $\psi$-th level of the SA hierarchy applied to $\mathcal{M}_c$ for a graph $G \in \mathcal{F}$. Analyzing the construction of the constraints of the $\psi$-th level of SA as outlined in the previous section, we have the following sets of constraints:

\smallskip \noindent \textbf{Degree constraints:} These correspond to all the constraints
\begin{displaymath}
\Big( 1 - \sum_{e \in \delta(v)} y_e \Big)  \prod_{\gamma \in \Gamma} y_{\gamma} \prod_{\delta \in \Delta} (1-y_{\delta}) \equiv \Big( 1 - \sum_{e \in \delta(v)} y_e \Big)
\sum_{H \subseteq \Delta} (-1)^{|H|} y_{\Gamma \cup H} \geq 0
\end{displaymath}
where $\Gamma, \Delta \subseteq [n]$: $\Gamma \cap \Delta = \emptyset$ and $|\Gamma|, |\Delta| \leq \min\{n, \psi +1 \}$. This is still not a linear constraint. If we insist
to fully linearize them, then they will take the form
\begin{displaymath}
\sum_{H \subseteq \Delta} (-1)^{|H|} y_{\Gamma \cup H} - \sum_{e \in \delta(v)}\sum_{H \subseteq \Delta} (-1)^{|H|} y_{\Gamma \cup H \cup \{ e \}} \geq 0.
\end{displaymath}

In the above $H, \Gamma$ are set of \emph{indices} of variables. By abusing notation slightly we allow ourself to write $\Gamma \cup H \cup \{ e \}$ where for $e$ we mean the
index of its corresponding variable. This is true in all the following.  We also use $y_e$ instead of $y_{\{e\}}$ (since the coordinates of $\mathbf{y}$ are defined on sets rather than elements).

\smallskip
\noindent \textbf{Color constraints:} Similarly, for all the color constraints we add all the constraints of the form
$$w_j \cdot \Big(\sum_{H \subseteq \Delta} (-1)^{|H|} y_{\Gamma \cup H} \Big) - \sum_{e \in E_j} \sum_{H \subseteq \Delta} (-1)^{|H|}  y_{\Gamma \cup H \cup \{ e \}}  \geq 0.$$

\smallskip
\noindent \textbf{Non-negativity constraints:} These are the constraints $1 -y_e \geq 0$ and $y_e \geq 0$, $\forall e \in E$. Identically with the previous cases, these
    constraints will become, respectively,
\begin{displaymath}
\sum_{H \subseteq \Delta}(-1)^{|H|} y_{\Gamma \cup \Delta} -  \sum_{H \subseteq \Delta } (-1)^{|H|} y_{\Gamma \cup H \cup \{e\}} \geq 0 ~~~\mbox{and}~  \sum_{H \subseteq
\Delta} (-1)^{|H|} y_{\Gamma \cup H \cup \{ e \}} \geq 0.
\end{displaymath}


\begin{lemma}
The vector $\mathbf{y}$, as defined above, is feasible for the $\psi$-th level of the Sherali-Adams hierarchy applied on $\mathcal{M}_c$, for any $\psi = o(2^{\ell-2})$.
\end{lemma}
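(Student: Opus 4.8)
The plan is to exploit the extreme sparsity of $\mathbf{y}$: since $y_{\emptyset}=1$, $y_{\{e\}}=\rho$ and $y_{I}=0$ for every $|I|\ge 2$, each linearized monomial appearing in an SA constraint survives only when its index set is empty or a singleton. I would therefore first isolate the two building blocks common to every constraint family, namely the ``left'' term $M(\Gamma,\Delta):=\sum_{H\subseteq\Delta}(-1)^{|H|}y_{\Gamma\cup H}$ and, for each edge $e$, the ``right'' term $T_e(\Gamma,\Delta):=\sum_{H\subseteq\Delta}(-1)^{|H|}y_{\Gamma\cup H\cup\{e\}}$, and evaluate both in closed form. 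Using $\Gamma\cap\Delta=\emptyset$ and $H\subseteq\Delta$, a short case analysis on $|\Gamma|$ gives $M=0$ when $|\Gamma|\ge 2$, $M=\rho$ when $|\Gamma|=1$, and $M=1-|\Delta|\rho$ when $\Gamma=\emptyset$ (only $H=\emptyset$ and the singletons $H=\{\delta\}$ contribute). For $T_e$ the key point is a sign cancellation: $T_e=0$ unless $\Gamma\cup H\cup\{e\}$ is a singleton, which forces $\Gamma\subseteq\{e\}$ and $H\subseteq\{e\}$; carrying this out yields $T_e=\rho$ exactly when $\Gamma=\{e\}$, or when $\Gamma=\emptyset$ and $e\notin\Delta$, and $T_e=0$ otherwise (when $\Gamma=\emptyset$ and $e\in\Delta$, the terms $H=\emptyset$ and $H=\{e\}$ cancel).

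With these formulas I would substitute into each constraint family and reduce every inequality to an elementary bound on $\rho$. The SA-2 positivity constraints $M\ge 0$ and the lifted $y_e\ge 0$ constraints $T_e\ge 0$ are immediate, since $T_e\in\{0,\rho\}$ and $M\in\{0,\rho,1-|\Delta|\rho\}$ are nonnegative as soon as $\rho\le 1/\psi$ (which $\rho$ satisfies). A generic inequality of degree/color type reads $c\,M(\Gamma,\Delta)-\sum_{e\in S}T_e(\Gamma,\Delta)\ge 0$, with $(c,S)=(1,\delta(v))$ for degree and $(c,S)=(w_j,E_j)$ for color. The cases $|\Gamma|\ge 2$ give $0\ge 0$, and $|\Gamma|=1$ collapses to $\rho\,(c-[\gamma\in S])\ge 0$, which holds because $w_j=2(1-\epsilon)\ge 1$. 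The only interesting case is $\Gamma=\emptyset$, where the inequality becomes $c\,(1-|\Delta|\rho)-\rho\,(|S|-|\Delta\cap S|)\ge 0$, i.e. $\rho\le c/\bigl(c|\Delta|+|S|-|\Delta\cap S|\bigr)$.

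I would then argue that the adversary's tightest choice is $\Delta\cap S=\emptyset$ and $|\Delta|=\psi$ (increasing $|\Delta|$ raises $c|\Delta|$ and decreasing $|\Delta\cap S|$ to $0$ raises the denominator; both are realizable since $|E\setminus S|\gg\psi$), reducing the requirement to $\rho\le c/(c\psi+|S|)$. Checking this worst-case bound against the definition of $\rho$ finishes the argument. For the color constraint, $(c,|S|)=\bigl(2(1-\epsilon),2^{\ell-1}\bigr)$ gives $\rho\le (1-\epsilon)/\bigl(2^{\ell-2}+\psi(1-\epsilon)\bigr)$, which holds with equality by the very definition of $\rho$; this is the binding family. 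For the degree constraint, $(c,|S|)=(1,\ell)$ gives $\rho\le 1/(\psi+\ell)$, which after cross-multiplication is equivalent to $(1-\epsilon)\ell\le 2^{\ell-2}$; and the lifted upper-bound constraint $1-y_e\ge 0$ (i.e. $M-T_e\ge 0$) reduces in its worst case to $\rho\le 1/(\psi+1)$, equivalent to $(1-\epsilon)\le 2^{\ell-2}$. Both numerical inequalities hold throughout the relevant range, and since $\psi=o(2^{\ell-2})$ forces $\ell$ to be large, there is no tension.

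The main obstacle is purely bookkeeping rather than conceptual: obtaining the closed forms for $M$ and $T_e$ correctly, in particular the cancellation that annihilates $T_e$ when $e\in\Delta$, and then correctly identifying that the single extremal configuration $\Gamma=\emptyset,\ \Delta\cap S=\emptyset,\ |\Delta|=\psi$ dominates every constraint simultaneously. Once these two points are settled, the whole lemma collapses onto the three one-line numerical inequalities above, of which the color bound is tight by construction and the other two hold for all sufficiently large $\ell$.
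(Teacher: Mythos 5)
Your proposal is correct and follows essentially the same route as the paper: the same case analysis on $|\Gamma|$, the same closed-form evaluation of the two linearized sums, and the same reduction to the numerical bound $\rho\bigl(|E_j|+\psi w_j\bigr)\le w_j$, which is tight by the definition of $\rho$. Your only (cosmetic) difference is packaging the degree, color and non-negativity families into one generic inequality $c\,M-\sum_{e\in S}T_e\ge 0$ and identifying the common worst case $\Gamma=\emptyset$, $\Delta\cap S=\emptyset$, $|\Delta|=\psi$, which the paper instead verifies family by family.
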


\begin{proof}
First of all, it is immediate from the definition that $\mathbf{y}$ satisfies all the initial constraints (the constraint matrix of $\mathcal{M}_c$) or, in other words, the zero-th
level of the SA hierarchy applied to $\mathcal{M}_c$. We will prove that it satisfies all the color constraints arising after $\psi$ rounds, for any $\psi$. The other
two set of constraints  can be shown to be satisfied by the vector $\mathbf{y}$ using identical, and in fact easier, arguments. At the end, by selecting any $\psi = o(2^{\ell-2})$, we will prove that the value of the fractional solution is twice the value of the optimal integral one.

So, we have to show that for the defined $\mathbf{y}$ we have that
$$\Xi = \underbrace{w_j \sum_{H \subseteq \Delta} (-1)^{|H|} y_{\Gamma \cup H}}_{\Sigma_1}  - \overbrace{\sum_{e \in E_j}\sum_{H \subseteq \Delta} (-1)^{|H|} y_{\Gamma \cup H \cup \{ e \}}}^{\Sigma_2}  \geq 0. $$

To prove our claim, we will distinguish between three major cases with respect to the cardinality of the set $\Gamma$:

\begin{description}
\item[Case 1. $|\Gamma| \geq 2$:]

In this case we have that $|\Gamma \cup H| \geq 2$, $\forall H \subseteq \Delta$ and so, be the definition of the solution vector $\mathbf{y}$ we have that
    $y_{\Gamma \cup H} = 0$. So, both $\Sigma_1,\Sigma_2$ become zero forcing the entire sum $\Xi$ to be zero and thus the constraint is trivially satisfied.

\item[Case 2. $|\Gamma| = 1$:]

In this case $\Gamma$ contains the index of some edge $e \in E(G)$ and again, by slightly abusing notation, we can write that $\Gamma = \{e\}$. In that case, there are two possibilities regarding the set $\Delta$ which we need to handle.

    We will first show that $\{e\}$ cannot belong in the set $\Delta$. Indeed, assume $\{e\}\in\Delta$. Then $\Gamma \cap \Delta$ is not equal to $\emptyset$. Using this we will show that the
    whole sum $\Xi$ is zero (and this is the reason why we impose the requirement that $\Gamma \cap \Delta$ should be $\emptyset$): For this, let $H \subseteq \Delta$ such that $\{e\} \notin H$ (the case $H = \{e\}$ is treated completely symmetrically). Then, the corresponding term in the sum becomes $(-1)^{|H|} y_{\Gamma \cup H}$. Consider now the term $H \cup \{e\}$. The corresponding term in the sum is now

    $$(-1)^{|H|+1} y_{\Gamma \cup H \cup \{ e\}} = (-1)^{|H|+1} y_{\Gamma  \cup H},$$

(since $\{e\} \in \Gamma$, we have that $y_{\Gamma \cup H \cup\{e\}} = y_{\Gamma \cup H}$), a term that has opposite sign than $(-1)^{|H|} y_{\Gamma \cup H}$. So, the two terms cancel each other, and the whole sum is zero. This shows that if $\{e\} \in \Delta$ then $\Xi$ is satisfied.

We will consider now the case where $\{e\} \notin \Delta$. This is equivalent to $\Delta = \emptyset$ since, otherwise, we would have $|\Gamma \cup \Delta|>1$ and so, be definition of $\mathbf{y}$, $y_{\Gamma \cup \Delta} = 0$. In that case, the sum $\Sigma_1$ is of the form $w_j \cdot (-1)^{0}y_{\Gamma \cup \emptyset} = y_{\{e\}} = w_j \cdot \rho \geq 0$ and the sum $\Sigma_2$ becomes simply $\rho$ because the only surviving term for the outermost summation (over all indexes of edges in $E_j$) is for the particular $\{e\} = \Gamma$ since the term $y_{e \cup e'} = 0$ for $e'\neq e$, and so we have that $ \Xi = w_j\rho - \rho > 0$ and so the constraint $\Xi$ is again satisfied.

\item[Case 3. $|\Gamma| = 0$:]

In this case, we will derive expressions for $\Sigma_1, \Sigma_2$ and compare them to prove the claim.

We start with $\Sigma_1$ and we see that in this case the only terms that survive are the term $y_{\emptyset} = 1$ with coefficient $(-1)^0 = 1$ and all the terms of the form $y_{\{h\}}$ for $h \in \Delta$ with coefficient $(-1)^1 = -1$. We have $|\Delta| \leq \psi$ many such terms so, at the end, we have that
\begin{displaymath}
\Sigma_1 = w_j(1 - |\Delta| \rho).
\end{displaymath}

For $\Sigma_2$ we proceed as follows: first we fix an $e \in E_j$. For this $e$, the surviving terms of the second sum are the ones corresponding to $H = \emptyset$ and $H =
\{e\}$, if $e \in \Delta$. For $H = \emptyset$ the corresponding term becomes $(-1)^0 y_{\{e\}} = \rho$ and we have one such term for each $e \in E_j$. For $H = \{e\} \in E_j
\cap \Delta$ the corresponding term becomes $(-1)^1 y_{\{e\} \cup \{e\}} = -1y_{\{e\}} = - \rho$ and we have $|E_j \cap \Delta|$ many such terms. For all $e' \in \Delta$ such
that $e' \notin E_j$, the corresponding terms become $y_{\{e'\}\cup \{e\}} = 0$ by definition. So, all in all,
$$
\Sigma_2  =  \rho|E_j|  - \rho |E_j \cap \Delta|.
$$

Since $\Xi = \Sigma_1 - \Sigma_2$, we want to prove that $\Xi \geq 0$ which is equivalent from the above derivations on $\Sigma_1$ and $\Sigma_2$ to
\begin{eqnarray*}
\Xi & = & \Sigma_1 - \Sigma_2 \\
& = & \Big(w_j(1 - |\Delta| \rho)\Big) - \Big(|E_j|\cdot \rho - |E_j \cap \Delta| \cdot \rho \Big) \geq 0 \\
& \Leftrightarrow & |E_j|\rho - |E_j\cap \Delta|\rho \leq w_j (1-|\Delta|\rho).
\end{eqnarray*}

Since $|E_j\cap \Delta|\rho \geq 0$, we will show that
$$|E_j|\rho \leq w_j (1-|\Delta|\rho) \Leftrightarrow |E_j|\rho + w_j|\Delta|\rho \leq w_j$$

\noindent
which trivially implies that $\Xi \geq 0$. Indeed, using the fact that $|\Delta| \leq \psi$ and $|E_j| = 2^{\ell-1}$, we have that
\begin{eqnarray*}
|E_j|\rho + w_j|\Delta|\rho  & \leq & \rho\Big( |E_j| + \psi \cdot w_j \Big) \\
& = & \frac{(1-\epsilon)}{2^{\ell-2} + \psi(1-\epsilon)} \cdot \Big( 2^{\ell-1} + \psi \cdot 2(1-\epsilon) \Big) \\
& = & \frac{(1-\epsilon)}{2^{\ell-2} + \psi(1-\epsilon)} \cdot 2 \cdot \Big(2^{\ell-2} + \psi \cdot (1-\epsilon)\Big) \\
& = & 2(1-\epsilon) =  w_j,
\end{eqnarray*}

\noindent
as required and this concludes the proof that $\Xi \geq 0$ when $|\Gamma| = 0$.
\end{description}

\smallskip
We have proven that the proposed vector $y$  satisfies all the color constraints arising after at most $\psi$ rounds of the Sherali-Adams hierarchy and observe that the analysis above is independent of the actual value of $\psi$. If we want to retain the integrality gap of 2 we will show that choosing any $\psi = o(2^{\ell-2})$ achieves this. The rest of the constraints (non-negativity, degree) can be proven to be satisfied by the proposed solution vector $y$, for the same bounds on the number of rounds $\psi$, in an identical manner. Here we briefly mention the details for the remaining cases. In order to show that the degree constrains are satisfied by the proposed vector $y$, we follow the calculations as above, and indeed the first two cases go through in exactly the same way. For the case $|\Gamma| = 0$ and for the constraint imposed by a vertex $v$, everything boils down to showing that $\rho(|\delta(v)|+\psi) =  \rho (\ell + \psi) \leq 1$, which is trivially true by the definition of $\rho$. Regarding the non-negativity constraints, we will first show that $\sum_{H \subseteq \Delta} (-1)^{|H|} y_{\Gamma \cup H \cup \{e\}} \geq 0$, which corresponds to the constraint $y_e \geq 0,$ for all $e \in E$. Again, we distinguish three cases regarding the cardinality of $\Gamma$.

Fix a term corresponding to a non-negativity constraint for an edge $e$. If $|\Gamma| > 1$ then the corresponding term is zero by definition. If $|\Gamma| = 1$ then the corresponding term is non-zero only when $\Gamma = e$. Again as before, in both cases where $e \in \Delta$ and $e \notin \Delta$, we see that the constraint is satisfied: if $e \in \Delta$ then the constraint is equal to zero otherwise is equal to $y_e > 0$. If $|\Gamma| = 0$ again we distinguish two cases regarding whether $e \in \Delta$ or not. In both cases, identical arguments as before show that the constraint should be $\geq 0$. Indeed, if $e \in \Delta$ then the term (constraint) becomes zero (the only surviving terms are the one for $H = \emptyset$ and $H = \{e\}$), otherwise it becomes $y_e > 0$ (the only surviving term is the one for $H= \emptyset$). The second non-negativity constraint $\sum_{H \subseteq \Delta}(-1)^{|H|} y_{\Gamma \cup \Delta} -  \sum_{H \subseteq \Delta } (-1)^{|H|} y_{\Gamma \cup H \cup \{e\}}$ is a special case of the degree constraint and so its non-negativity follows directly from the non-negativity of the latter.
\end{proof}

We now bound the value of the objective function for this $\mathbf{y}$: $\forall \epsilon > 0$ and $\psi = o(2^{\ell-2})$
$$ \textsf{value} (\mathbf{y})  =   \lim_{\epsilon \rightarrow 0} \Big(\ell 2^{\ell-1}
\frac{(1-\epsilon)}{2^{\ell-2}+\psi(1-\epsilon)} \Big) = \lim_{\epsilon \rightarrow 0} 2\ell \frac{2^{\ell-2} (1-\epsilon)}{2^{\ell-2} + \psi(1-\epsilon)} = 2 \ell. $$

\begin{theorem}
For any $\epsilon >0$, there exist graphs $G$ on $n$ vertices and $m$ edges such that for any $\psi = o(m)$ the integrality gap of the $\psi$-th level of the Sherali-Adams hierarchy applied to $\mathcal{M}_c$ for $G$, is at least $\frac{2}{1+\vartheta}$, $\vartheta = o(1)$.
\end{theorem}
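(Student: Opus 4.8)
The plan is to read the final statement as a corollary that simply packages the preceding feasibility Lemma together with an upper bound on the integral optimum. Fix the parameter $\ell$ and take $G = Q_\ell \in \mathcal{F}$ as constructed above, so that $G$ has $n = 2^{\ell}$ vertices and $m = \ell\cdot 2^{\ell-1}$ edges, the edges are properly $\ell$-edge-coloured (one colour per coordinate direction), each colour class $C_j$ contains $2^{\ell-1}$ edges, and every colour bound is $w_j = 2(1-\epsilon)$. The unit-profit objective on the natural relaxation $\mathcal{M}_c$ is what the Sherali--Adams levels refine, so the integrality gap at level $\psi$ is the ratio of the level-$\psi$ relaxation optimum to the integral optimum.

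First I would bound the integral optimum from above. Since $w_j = 2(1-\epsilon) < 2$ for every $j$, any \emph{integral} matching can use at most $\lfloor w_j\rfloor = 1$ edge of each of the $\ell$ colours, so its cardinality is at most $\ell$; that is, $\mathsf{OPT}_{\mathrm{int}} \le \ell$. (Tightness is witnessed by a rainbow matching taking one pairwise-disjoint edge of each direction, but for a \emph{lower} bound on the gap only the inequality $\mathsf{OPT}_{\mathrm{int}} \le \ell$ is needed.)

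Next I would invoke the Lemma, which certifies that the vector $\mathbf{y}$ with $y_{\{e\}} = \rho = \frac{1-\epsilon}{2^{\ell-2}+\psi(1-\epsilon)}$ is feasible for the $\psi$-th level of the Sherali--Adams hierarchy applied to $\mathcal{M}_c$. Because the profits are uniform and all $m = \ell\,2^{\ell-1}$ edge variables carry the same value $\rho$, its objective value is $\textsf{value}(\mathbf{y}) = \ell\,2^{\ell-1}\rho$. Dividing by $\mathsf{OPT}_{\mathrm{int}} \le \ell$ shows that the integrality gap of the $\psi$-th level is at least
\[
\frac{\textsf{value}(\mathbf{y})}{\mathsf{OPT}_{\mathrm{int}}} \;\ge\; 2^{\ell-1}\rho \;=\; \frac{2(1-\epsilon)}{\,1+\psi(1-\epsilon)/2^{\ell-2}\,}.
\]
Writing this quantity as $\frac{2}{1+\vartheta}$ with $\vartheta = \frac{\epsilon}{1-\epsilon} + \frac{\psi}{2^{\ell-2}}$ gives precisely the claimed form.

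Finally I would argue $\vartheta = o(1)$: the term $\frac{\epsilon}{1-\epsilon}$ tends to $0$ as $\epsilon \to 0$, and the term $\frac{\psi}{2^{\ell-2}}$ tends to $0$ exactly in the regime $\psi = o(2^{\ell-2})$ used by the Lemma. The genuinely hard work of the whole argument is the Lemma itself (the case analysis on $|\Gamma|$ verifying every lifted colour, degree and non-negativity constraint); the Theorem contributes only the uniform-value objective computation, the trivial bound $\mathsf{OPT}_{\mathrm{int}}\le\ell$, and the asymptotic bookkeeping that converts the feasible value into the gap $\tfrac{2}{1+\vartheta}$. The one point requiring care is precisely this bookkeeping: since $2^{\ell-2} = \Theta(m/\ell)$, the range in which $\vartheta = o(1)$ is $\psi = o(2^{\ell-2})$, which is what I would take as the operative (essentially linear in $m$) bound on the number of rounds, keeping the $\psi/2^{\ell-2}$ contribution to $\vartheta$ within $o(1)$.
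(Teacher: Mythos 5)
Your proposal matches the paper's own (largely implicit) argument: the theorem is obtained exactly as you describe, by combining the feasibility Lemma with the uniform objective value $\ell\,2^{\ell-1}\rho \to 2\ell$ and the bound $\mathsf{OPT}_{\mathrm{int}} \le \ell$ forced by $w_j < 2$. Your remark that the honest range of rounds is $\psi = o(2^{\ell-2}) = o(m/\log m)$ rather than the $o(m)$ stated in the theorem correctly identifies a minor slip in the paper's own bookkeeping and does not affect the substance of the result.
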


\subsection{Integrality gap instances with integral bounds}

The results of the previous section used the fact that the color bounds were fractional numbers so a very natural question is whether we can find instances for the BCM problem with integer color bounds that cause the SA to perform poorly on them (whereby the integrality gap of 2 resists a large number of SA rounds). As the results of the previous section suggest, we need highly structured instances in order to ``fool" the SA hierarchy: both the degrees of the vertices and the cardinalities of the color classes are required to be $\Theta(n)$ in order to have strong integrality gaps for the SA hierarchy after $O(n)$ rounds. It is not clear at all if such instances exist and, if they do, how they can be constructed. In the next section we will show how the only instances that have an integrality gap of 2 that resist a  large number of Sherali-Adams rounds, must have fractional color bounds.

Observe that it is an easy task to come up with arbitrary instances that have integrality gap of 2 (for the natural initial relaxation): Consider the following family of bipartite instances, $\mathcal{B}$, for the BCM (in fact the Rainbow Matching) problem: take $k$ copies of the $C_4$ graph, $k \in \mathbb{N}$, where $C_4$ is the usual 4-cycle. Let the $i$-th copy of $C_4$, $C_4^i,~1\leq i \leq k$, have vertices $\alpha_i^1, \alpha_i^2,\alpha_i^3, \alpha_i^4$. Let $E_i^r = \{ \{ \alpha_i^1, \alpha_i^2 \}, \{\alpha_i^3, \alpha_i^4  \} \}$ and $E_i^b = \{ \{ \alpha_i^1, \alpha_i^4 \}, \{\alpha_i^2, \alpha_i^3 \} \} $. Now, connect the $i$-th copy of $C_4$, $C_4^i,~1\leq i \leq k-1$ with the $(i+1)$-th as follows: add the edge $\{ \alpha_i^2, \alpha_{i+1}^1 \}$ and assign to this edge a new color,  say $c_{w}$. Add the edge $\{ \alpha_i^3, \alpha_{i+1}^4 \}$ and color it again with $c_{w}$. Connect $C_4^k$ with $C_4^1$ in same way as before and assign to the two new edges color $c_{w}$. All color bounds are set to 1. See Figure \ref{example_bcc}.

\begin{figure}[h!]
\centering
\includegraphics[scale=0.85]{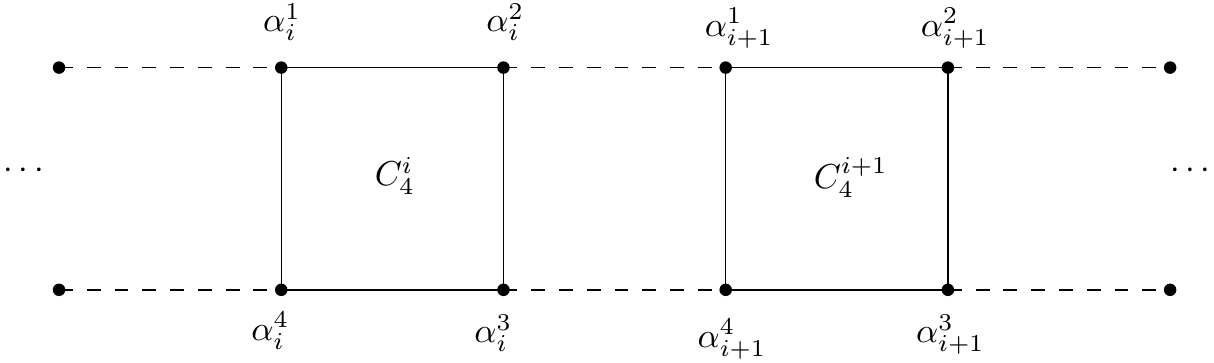}
\caption{A graph from $\mathcal{B}$.}
\label{example_bcc}
\end{figure}

All in all, our graph has $4k$ vertices and $4k + 2k = 6k$ edges, i.e.,  $E =  E_w \cup (\bigcup_{i=1}^k ( E_{i}^r \cup E_{i}^b) )$ where $|E_i^{j}| = 2$, $\forall i \in [k], j \in \{r,b \}$ and $|E_w| = 2k$.   The optimal integral solution has value $k+1$ whereas the optimal fractional solution can have value  $2k$. By following very similar calculations as in the previous section we can see that after only a few (constant, in fact 3) number of rounds the Sherali-Adams procedure will declare any fractional vector approaching value $2k$ as infeasible. This is consistent with the structural requirements explained above that are needed in order to prove large SA integrality gaps after a large number of rounds. As we will see in the next section, this is not a coincidence: these bi-chromatic cycles $C^i_4$ are, in some very precise sense, the only obstacles for instances with improved integrality gap bounds.

\smallskip
We move on by describing a third family $\mathcal{F}'$ of integrality gap instances. In contrast with the previous two families, this family will have integrality gap of $\nicefrac{k}{k-1}$ for parameter $k$. However, in contrast with the second family $\mathcal{B}$ described above, this bound on the integrality gap resists any sub-linear number of the Sherali-Adams strengthening. The construction is as follows: let $k = 2\ell$, where $\ell$ a positive integer greater than or equal than 1. The graph will be bipartite with bipartition $L,R$ where $|L| = |R| = k = 2\ell$. Moreover, the graph will be $k$-regular i.e., each vertex from the ``left" bipartition $L$ will be connected to each vertex of the ``right" bipartition $R$. We now provide the coloring of the edges to complete the instance of the BCM problem. The resulting graph will be properly edge-colored. We have $k = 2\ell$ different colors $c_0, c_2, \dots, c_{k-1}$. Take vertex $v_j \in L$, $j \in \{0,1, \dots, k-1\}$. For each $\Delta \in \{0, 1, \dots, k-1\}$ edge $(v_j, v_{(j + \Delta) \mod k})$ gets color $c_{\Delta}$.

\begin{figure}[h!]
\centering
\includegraphics[scale=0.90]{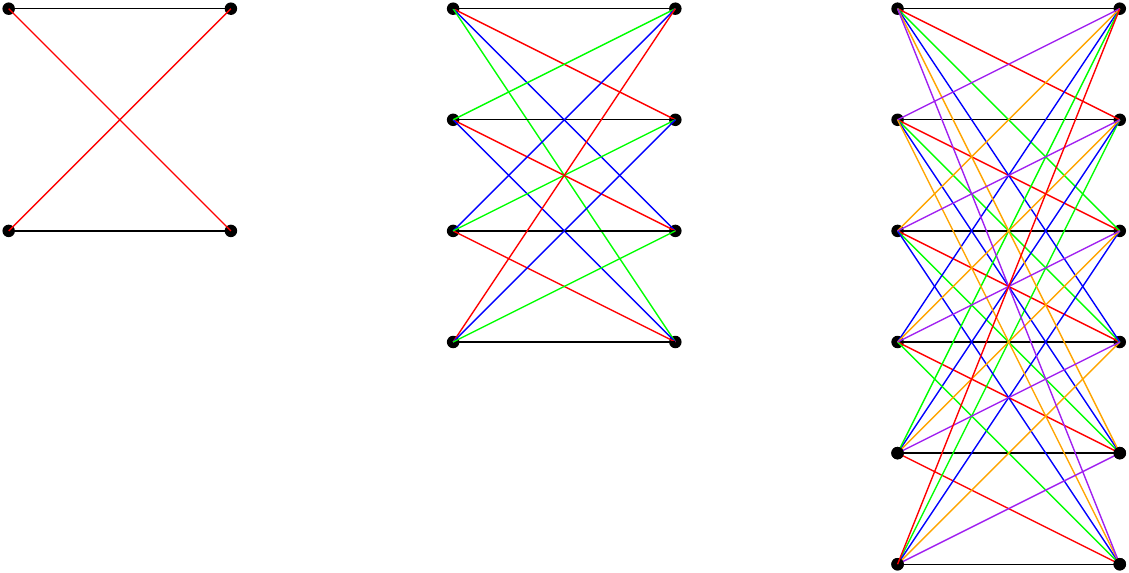}
\caption{The graphs corresponding to $\ell = 1,2,3$. The ordered list of colors is  black, red, blue, green, orange, purple and so, for example, black connects all vertices $v_i \in L$ to $u_i \in R$.}
\label{int_bounds}
\end{figure}

In other words, for every vertex $v_j \in L$ we take the ordered list of colors $c_0, \dots c_{k-1}$ and we paint $(v_j, u_j)$ with $c_0$, $(v_j,u_{j+1})$ with $c_1$ and so on. Set the bound of each color class equal to 1.  See Figure \ref{int_bounds} for a demonstration of the construction for $\ell = 1,2,3$.

For $\ell =1 (k=2)$ we have a bi-chromatic cycle  which has integrality gap of 2. For $\ell =2 (k=4)$ the integrality gap is 4 over 3: the maximum colored matching is 3 but the LP can have fractional value of 4. In general, for any even $k$, the maximum colored matching has cardinality of at most $k-1$ (see Lemma \ref{transversal_lemma} below)
versus fractional value of $k$: set $x_e = \nicefrac{1}{k}$ for all edges $e$. Since the degree of every vertex and the cardinality of every color class are both $k$, this constitutes a feasible solution. Then, since we have $k \times k$ edges, the overall objective function value is $k$ giving an integrality gap of $\nicefrac{k}{k-1}$.

\begin{lemma}\label{transversal_lemma}
Let $G_k$ be a graph constructed as above for some even positive number $k$. Then, the cardinality of the maximum matching that has at most one edge per color is at most $k-1$.
\end{lemma}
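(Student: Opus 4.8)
The plan is to recognize maximum-size rainbow matchings in $G_k$ as a classical algebraic object and then rule out their existence by a short modular counting argument. First I would observe that $G_k$ is the complete bipartite graph $K_{k,k}$, so any matching of cardinality $k$ is necessarily a \emph{perfect} matching and therefore corresponds to a permutation $\sigma$ of $\mathbb{Z}_k = \{0,1,\dots,k-1\}$ that pairs each left vertex $v_j$ with the right vertex $u_{\sigma(j)}$. By the coloring rule, the edge $\{v_j,u_{\sigma(j)}\}$ receives color $c_{(\sigma(j)-j)\bmod k}$. A matching of size $k$ obeying the color bounds uses at most one edge per color and has $k$ edges, hence uses each of the $k$ colors exactly once; equivalently, the map $\theta\colon j \mapsto (\sigma(j)-j)\bmod k$ must itself be a \emph{bijection} of $\mathbb{Z}_k$. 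Thus the Lemma reduces to the following claim: for even $k$ there is no permutation $\sigma$ of $\mathbb{Z}_k$ for which $j\mapsto \sigma(j)-j$ is also a permutation (such a $\sigma$ is exactly a \emph{complete mapping} of the cyclic group $\mathbb{Z}_k$, whose non-existence in even order is classical, but I would give the short self-contained argument).

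The key step is a counting argument modulo $k$. From the congruence $\sigma(j)\equiv j+\theta(j)\pmod{k}$, summing over all $j\in\mathbb{Z}_k$ gives $\sum_j \sigma(j)\equiv \sum_j j + \sum_j \theta(j)\pmod{k}$. If $\sigma$ and $\theta$ are both permutations of $\mathbb{Z}_k$, then each of the three sums equals $\sum_{j=0}^{k-1} j = \tfrac{k(k-1)}{2}$, which forces $\tfrac{k(k-1)}{2}\equiv 0\pmod{k}$. Writing $k=2m$ for even $k$, one computes $\tfrac{k(k-1)}{2} = m(2m-1)\equiv -m\equiv \tfrac{k}{2}\pmod{k}$ (using that $k-1$ is odd), and since $0<\tfrac{k}{2}<k$ we have $\tfrac{k}{2}\not\equiv 0\pmod{k}$. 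This contradiction shows that no suitable $\sigma$ exists when $k$ is even, so $G_k$ has no rainbow matching of size $k$ and the maximum such matching has cardinality at most $k-1$, as claimed.

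The only point requiring care is the congruence bookkeeping: that $\tfrac{k(k-1)}{2}\bmod k$ equals $0$ for odd $k$ but $\tfrac{k}{2}$ for even $k$, which is precisely the obstruction being exploited and explains why the parity of $k$ governs the bound (indeed, for odd $k$ a complete mapping does exist, matching the claim in the text that the instance is interesting exactly for even $k$). I do not expect a genuine obstacle beyond making these modular computations precise and justifying the reduction to permutations via the completeness of $K_{k,k}$; the substantive insight is simply identifying the color-profile map $\theta$ with a complete mapping of $\mathbb{Z}_k$.
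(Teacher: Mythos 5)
Your proof is correct and follows essentially the same route as the paper's: both identify a size-$k$ rainbow matching in $G_k$ with a transversal of the cyclic Latin square of order $k$ (equivalently, a complete mapping of $\mathbb{Z}_k$). The only difference is that where the paper simply cites the classical non-existence of such transversals for even order, you supply the short Euler-style counting argument modulo $k$ (summing $\sigma(j)\equiv j+\theta(j)$ over $\mathbb{Z}_k$ to force $k(k-1)/2\equiv 0 \pmod k$, which fails for even $k$), making the proof self-contained; that computation is correct.
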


\begin{proof}
Let $A$ be the matrix whose $(i,j)$ entry has the color of the edge $(v_i, u_j) \in L \times R$. It is easy to notice that $A$ is in fact a \textit{Latin Square} of order $k$ i.e., a matrix filled with $k$ different symbols (the colors), each color occurring exactly once in each row and exactly once in each column. Furthermore, by construction, $A$ corresponds to a \textit{cyclic group} of order $k$ (for every row, the list of colors is shifted by one with respect to the preceding row). If there was a feasible (colorful) matching of size $k$, then this would correspond to a \textit{transversal} in $A$: A transversal of a Latin Square is a set of entries which includes exactly one entry from each row and column and one of each symbol. But Latin Squares of cyclic groups of \textit{even} order cannot have a transversal \cite{transversals_latin_squares}.
\end{proof}

We note that the choice of $k$ being an even number is crucial: According to the well known Ryser's conjecture \cite{ryser} every Latin Square of order $n$ contains a Latin transversal when $n$ is odd. In our context this is equivalent to saying that every proper $n$-edge-coloring of the complete bipartite graph $K_{n,n}$ always contains a colorful perfect matching when $n$ is an odd number. Although this is a conjecture for general properly edge colored complete bipartite graphs, we can easily prove that this is true for the cyclic group corresponding to the coloring of the edges given above: Take the edge $(v_0, u_0)$ of color $c_0$. For $j=1, \dots, k-1$, vertex $v_j$ is matched to vertex $u_{(j+j) \mod k}$ and this edge $(v_j, u_{(j+j) \mod k})$ gets color $c_j$.  This constitutes a perfect colorful matching. Alternatively, the greedy strategy gives always such a matching. This implies that Ryser's conjecture is true for Latin Squares corresponding to cyclic groups.

Now, identical arguments as before (we omit the details since they are completely identical) give us that $o(k)$ rounds of the Sherali-Adams strengthening are not enough to reduce the integrality gap below $\nicefrac{k}{k-1}$ in the slightest.

\begin{theorem}
Let $G_k \in \mathcal{F}'$. Then, even  $o(k)$ rounds of the Sherali-Adams hierarchy applied to the natural LP relaxation of the BCM problem on this instance, are not enough to reduce the integrality gap below $\frac{k}{k-1}$.
\end{theorem}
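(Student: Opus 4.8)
The plan is to recycle the symmetric fractional point used in the proof of the feasibility lemma for the family $\mathcal{F}$, adjusting only the numeric constant $\rho$ to the parameters of $G_k$. Concretely, I would set $y_{\emptyset}=1$, $y_{\{e\}}=\rho$ for every edge $e$, and $y_I=0$ for all $|I|\ge 2$, where now $\rho=\frac{1}{k+\psi}$. The relevant parameters of $G_k$ are: the color bound is $w_j=1$ for every $j$; every vertex has degree $k$ (the graph is a complete bipartite $K_{k,k}$); there are $k^2$ edges in total; and each color class has exactly $k$ edges, since the coloring is an order-$k$ Latin square and every symbol occurs once in each of the $k$ rows.

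The feasibility check for the $\psi$-th SA level would again proceed by the three-case split on $|\Gamma|$. The cases $|\Gamma|\ge 2$ and $|\Gamma|=1$ are purely combinatorial --- they rely only on the fact that $y_I=0$ for $|I|\ge 2$ and on the sign-cancellation between the terms indexed by $H$ and $H\cup\{e\}$ --- so they transfer verbatim, independently of the value of $\rho$. The only arithmetic lies in the case $|\Gamma|=0$: for a color constraint it reduces, exactly as before, to $(|E_j|+w_j\psi)\rho\le w_j$, i.e.\ $(k+\psi)\rho\le 1$, and for a vertex-degree constraint to $(|\delta(v)|+\psi)\rho=(k+\psi)\rho\le 1$. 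Both inequalities hold with equality for $\rho=\frac{1}{k+\psi}$, and non-negativity is trivial since $\rho\ge 0$. Hence $\mathbf{y}$ is feasible for every level $\psi$.

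It remains to read off the gap. The objective value of $\mathbf{y}$ is $k^2\rho=\frac{k^2}{k+\psi}$, while by Lemma \ref{transversal_lemma} the optimal integral solution has value at most $k-1$. The ratio is therefore
\begin{displaymath}
\frac{k^2}{(k-1)(k+\psi)}=\frac{k}{k-1}\cdot\frac{k}{k+\psi},
\end{displaymath}
and since $\frac{k}{k+\psi}=1-o(1)$ whenever $\psi=o(k)$, the integrality gap of the $\psi$-th SA relaxation stays at $\frac{k}{k-1}(1-o(1))$. In other words, a sub-linear number of rounds fails to push the gap meaningfully below $\frac{k}{k-1}$, which is the claim.

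I do not expect a genuine obstacle here, since the argument is an instantiation of the earlier template; the only points requiring care are fixing $\rho$ correctly and confirming the color-class cardinality $|E_j|=k$ from the Latin-square structure, which feed into the single binding inequality $(k+\psi)\rho\le 1$. The mild subtlety is interpretive: because $w_j=1$ is integral and the binding inequality is tight, the relaxation value $\frac{k^2}{k+\psi}$ is strictly below $k$ for $\psi>0$, so ``not below $\frac{k}{k-1}$'' must be read asymptotically, as the gap approaching $\frac{k}{k-1}$ up to a $1-o(1)$ factor, exactly in the spirit of the $\frac{2}{1+\vartheta}$ bound obtained for the family $\mathcal{F}$.
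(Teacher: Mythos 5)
Your proposal is correct and is exactly the argument the paper intends: the paper's own proof consists of the single remark that ``identical arguments as before'' apply, and you have carried out that template faithfully, with the right instantiation $\rho=\frac{1}{k+\psi}$, the correct parameters $|E_j|=k$, $|\delta(v)|=k$, $w_j=1$, and the integral bound $k-1$ from Lemma~\ref{transversal_lemma}. Your closing remark that the gap should be read as $\frac{k}{k-1}(1-o(1))$ is also the right reading, matching the $\frac{2}{1+\vartheta}$ formulation used for the family $\mathcal{F}$.
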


\section{Improved Integrality Gap Bounds}
In this section we will study more carefully the integrality gap properties of the natural linear relaxation $\mathcal{M}_c$ of the BCM problem. The previous section suggests that the bi-chromatic cycles on four vertices are building blocks of instances of integrality gap 2. Here, we will formalize this result in the following strong sense: if we exclude these simple sub-structures (bi-chromatic cycles with alternating colors, like the $C_4^i$s above) from our input graphs, then the integrality gap \emph{strictly improves}. Towards that goal, we will firstly cast the problem as a natural \emph{hypergraph} matching problem. In order to provide an upper bound on the fractional value for a given instance (as a function of two relevant parameters: the size of its matching and the number of disjoint copies of these sub-structures) of the natural linear relaxation of BCM (as hypergraph matching problem), we will use the dual relaxation of $\mathcal{M}_c$: the value of any feasible solution to this dual program will provide an upper bound on the feasible fractional value of $\mathcal{M}_c$ (including the optimal value of it). We will then relate this value to the optimal integral solution. We will distinguish between the cases where the input graph instance is a bipartite graph or not, and give slightly different bounds for these two cases, although the idea is identical.

A direct implication of this is the following: if we want to construct instances of the BCM problem for which the Sherali-Adams hierarchy cannot close the integrality gap of 2 after a large (i.e., sub-linear) number of rounds, then fractional bounds are necessary. This is because we will show that the absence of the bi-chromatic cycles immediately reduces the integrality gap and, moreover, as the results of the previous section suggest, the Sherali-Adams hierarchy very quickly recognizes such instances (declares vectors which assign fractional value of 2 to each such cycle as infeasible).

\smallskip
Without any loss, we will focus on the case where $w_j = 1$ for all color classes $C_j \in \mathcal{C}$ i.e., the Rainbow Matching problem. We can easily cast this case as a hypergraph matching problem as follows: let $G = (V,E_1 \dots E_k)$ be an instance of this rainbow matching problem. For each color class $C_j$, create a new vertex $c_j$ and let $N$ be the set of all these new vertices. For every edge $e = \{u,v\} \in E_j$ of the initial graph, create the hyperedge $\{u,v,c_j\}$. In this way we have created a uniform (each edge has three elements) hypergraph $H = (V \cup N, E_H)$ where $E_H$ is the set of hyperedges constructed as above. It is immediate that any feasible matching in $H$ translates 1-1 to a feasible matching of $G$ with exactly the same cardinality. Given an instance where $w_j = k > 1$ for some $j$,  obtain the following 3-hypergraph matching problem  by introducing $k$ new color classes $C_{j_1}, \dots, C_{j_k}$ and set $w_{j_i} = 1, \forall i \in [k]$. For every edge $e = \{u,v\}$ of color $C_j$, include all the hyperedges $(u,v, C_{j_i})$. Any hypergraph matching of cardinality $c$ in the new hypergraph can be transferred in an immediate way to a feasible solution for the initial BCM instance of the same cardinality, i.e., to a solution that can have at most $k$ edges of color $C_j$.

Now, for every hyperedge $e \in E_H$, we introduce a binary variable $x_e$. Then the standard integer linear formulation of this hypergraph matching problem is simply to maximize $\sum_{e \in E_H} x_e$ subject to $\sum_{e: v \in e} x_e \leq 1$, for all $v \in V(H)$. By relaxing the integrality constraints to $x_e \in [0,1]$ for all hyperedges $e$ we obtain the linear relaxation of this LP which, as we have already discussed, has integrality gap of 2. Let us call this LP $\mathcal{HM}_c$.

Let us take the minimal instance that has integrality gap of 2 for the BCM (and Rainbow Matching) problem: a simple bi-chromatic 4-cycle with alternating edges of these two colors. It is easy to observe that if we cast this instance as a hypergraph instance, then this is equivalent to the \emph{truncated 3-uniform projective plane}. We remind that a projective plane is a hypergraph that satisfies the following conditions: (1) for any two vertices of the hypergraph, there is a \emph{unique} hyperedge that contains them both, (2) for any two hyperedges, they share exactly one common vertex, and (3) there are four vertices of the hypergraph such that no hyperedge contains more than two of them. It is a well known fact that $r$-uniform projective planes exist if $r-1$ is a \emph{prime power} (see \cite{Matousek:1998:IDM:552237}, page 250). A \emph{truncated} projective plane is obtained by removing a single vertex from the initial projective plane and all the hyperedges incident to that vertex.  Interestingly, (truncated) projective planes are linked to integrality gaps of the hypergraph matching problem (since in a projective plane we can choose exactly one independent hyperedge): an $r$-uniform projective plane has integrality gap of $r-1+\frac{1}{r}$  whereas a truncated $r$-uniform projective plane has integrality gap of $r-1$ for their corresponding natural LP relaxations. The 3-uniform projective plane is known as the \emph{Fano} plane. The projective plane we obtain by truncating it is simply the bi-chromatic 4-cycle with alternating edges from the two colors. We denote such sub-instances by $BC$ and by $BC_H$ we denote their hypergraph translation.

We move on by defining the dual LP of the one described by $\mathcal{HM}_c$: given a 3-uniform hypergraph $H$, i.e., an instance of the hypergraph representation of the Rainbow Matching problem, for every edge we have a constraint and for every vertex $v$ of $H$ a variable $y_v$. Then, for every hyperedge $e$ of $H$ we have the constraint $\sum_{v \in e} y_v \geq 1$. This is the dual of the hypergraph matching relaxation and any feasible fractional solution to it provides an upper bound on the fractional solution of the linear relaxation of the hypergraph matching problem.  By duality, the two optimal values are the same. Let $y^*$ denote the optimal (minimum) fractional dual value for a given instance. This dual LP, let us call it $D(\mathcal{HM}_c)$, is also called a fractional covering (or transversal) LP.

\begin{theorem}\label{main_thm}
Let $H$ be a 3-uniform hypergraph (a hypergraph instance for the rainbow matching problem) such that $H$ has a matching (independent set of edges) of size $\mu \in \mathbb{Z}^+$. Assume that $H$ has at most $q$ pairwise disjoint copies of $BC_H$. Then, we have that

\begin{enumerate}
\item $y^*(H)  \leq  \nicefrac{3\mu}{2} + \nicefrac{q}{2}$ if the underlying graph is bipartite, and
\item $y^*(H)  \leq  \nicefrac{5\mu}{3} + \nicefrac{q}{3}$ otherwise.
\end{enumerate}
\end{theorem}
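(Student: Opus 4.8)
The plan is to prove both bounds by exhibiting, for each case, an explicit feasible solution to the covering LP $D(\mathcal{HM}_c)$: since $y^*(H)$ is the \emph{minimum} weight of a fractional transversal, it suffices to construct one transversal $y\colon V(H)\to\mathbb{R}_{\ge 0}$ of total weight at most $\tfrac{3\mu}{2}+\tfrac{q}{2}$ (bipartite) or $\tfrac{5\mu}{3}+\tfrac{q}{3}$ (general). Throughout I would exploit the special shape of $H$: every hyperedge contains exactly one color vertex of $N$ and two graph vertices of $V$. A short incidence count shows that no subset of the seven Fano points can meet every Fano line in exactly one vertex, so this single-color-per-hyperedge restriction forbids the full $3$-uniform projective plane from arising as a sub-instance; the only projective-plane obstruction that survives in $H$ is the \emph{truncated} one, namely $BC_H$. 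This is exactly the point at which Füredi's extremal analysis must be re-run in the restricted setting rather than cited verbatim.

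First I would fix a maximum matching $M$ with $|M|=\mu$ and set $W=\bigcup_{f\in M}f$, so that $|W|=3\mu$ and, by maximality, every hyperedge meets $W$. The construction of $y$ is then driven by how hyperedges meet $W$. Hyperedges meeting $W$ in at least two vertices are ``cheap'': a Füredi-type base assignment (roughly $\tfrac12$ per matched vertex) already covers them and produces the leading term $c_0\mu$. The genuinely dangerous hyperedges are the \emph{pendants}, those meeting $W$ in a single vertex, since these force weight onto otherwise-free vertices. The central structural tool here is an augmenting argument: if a matching edge $f=\{a,b,\gamma\}$ carried two vertex-disjoint pendants through two different vertices of $f$ (with their remaining endpoints unmatched and mutually disjoint), then deleting $f$ and inserting both pendants would give a matching of size $\mu+1$, contradicting maximality. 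Hence the pendants at a single matching edge are tightly constrained and can only form a small, mutually intersecting family.

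Next I would convert this local scarcity of pendants into the global weight bound. The key claim to establish is that whenever the pendant structure around a matching edge forces the local covering cost strictly above the base rate $c_0$, this excess is witnessed by a copy of $BC_H$: intersecting pendants through two vertices of $f$, together with the bichromatic pattern forced by the single-color-per-edge condition, close up into an alternating four-cycle. Charging each excess of $2-c_0$ (that is, $\tfrac12$ in the bipartite case and $\tfrac13$ in the general case) to a distinct member of a maximum family of $q$ pairwise disjoint copies of $BC_H$ then contributes the additive term $(2-c_0)q=\tfrac{q}{2}$ or $\tfrac{q}{3}$. The dichotomy enters only through the base rate: the extra structure available when the underlying graph is bipartite lets the base assignment reach $c_0=\tfrac32$, whereas in general one can only guarantee $c_0=\tfrac53$, with per-copy charge $2-c_0$ accordingly. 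One checks that on a single $BC_H$ both bounds evaluate to $\tfrac32+\tfrac12=\tfrac53+\tfrac13=2=\tau^*(BC_H)$, and on $q$ disjoint copies to $2q$, so the bounds are tight and the two cases agree on the extremal instances.

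I expect the main obstacle to be precisely this third step: proving that every locally over-priced configuration really contains, and can be charged to, a disjoint copy of $BC_H$, while simultaneously keeping the charging injective into the maximum disjoint family of size $q$ (no double counting across distinct matching edges whose pendants may share unmatched vertices). This is the combinatorial heart of the ``adaptation of Füredi'' the paper refers to: one must both push the base constant below $r-1+\tfrac1r=\tfrac73$ using the one-color-vertex-per-edge restriction, and re-derive the extremal characterization so that only the truncated plane $BC_H$ remains as the tight obstruction, all while keeping the book-keeping compatible with a fixed maximum matching and a fixed maximum disjoint family of bichromatic cycles.
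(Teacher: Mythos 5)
Your overall strategy --- fix a maximum matching $M$, put Füredi-type base weights on its $3\mu$ vertices, and charge the excess cost created by pendant hyperedges to pairwise disjoint copies of $BC_H$ --- is a genuinely different route from the paper's. The paper proves the bound by induction on $\mu$: it takes a basic feasible solution of $\mathcal{HM}_c$, invokes the sparsity result (Theorem \ref{thm_degree}) to find a vertex $v$ of degree at most $2$ in the support, deletes the hyperedges $e_1,e_2$ at $v$ together with everything meeting them, applies the inductive hypothesis to the resulting sub-hypergraphs $H(e_1),H(e_2)$, and recombines the dual solutions as $y(u)=\tfrac12\bigl(\delta_v(u)+y_1(u)+y_2(u)\bigr)$, paying the $+\nicefrac{q}{3}$ (resp.\ $+\nicefrac{q}{2}$) term exactly in the case where the local neighbourhoods $R(e_i)$ are isomorphic to $BC_H$. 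So the two arguments are organised quite differently, and a correct execution of yours would be a legitimate alternative; your observation that the incidence count $3|S|=7$ rules out the full Fano plane in this single-colour-vertex-per-hyperedge setting is also correct and is consistent with the paper's view that $BC_H$ is the only surviving obstruction.

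As written, however, the proposal has a genuine gap, and you have located it yourself: essentially the entire content of the theorem sits in your third step, which you only announce as ``the key claim to establish.'' You do not prove (i) that every matching edge whose pendant structure forces the local cost above the base rate actually yields a copy of $BC_H$ --- note that two intersecting pendants through distinct vertices of a matching edge $\{a,b,\gamma\}$ may meet in their unmatched \emph{graph} vertex (producing a triangle on $a,b,x$, not a $4$-cycle) or only in a shared colour vertex, so the claimed ``closing up into an alternating four-cycle'' requires a real case analysis; (ii) that the copies of $BC_H$ extracted at distinct matching edges are pairwise disjoint, which is indispensable because $q$ bounds the maximum number of \emph{pairwise disjoint} copies, so a non-injective charge proves nothing; and (iii) that bipartiteness actually lowers the achievable base rate from $\nicefrac53$ to $\nicefrac32$ per matching edge. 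Each of (i)--(iii) is asserted rather than derived, and together they constitute the theorem; until they are supplied the proposal is a plausible plan, not a proof.
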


We will prove the claim by induction on $\mu$, the cardinality of the matching in $H$. For that, we will find useful a translation of the following result from \cite{DBLP:conf/mfcs/Stamoulis14} which says that any basic feasible solution for $\mathcal{M}_c$ (and, consequently, the natural linear programming relaxation for the 3-uniform hypergraph matching interpretation of the Rainbow/BCM problem captured by $\mathcal{HM}_c$) has a very particular structure. The result holds on both general and bipartite graphs. We restate the result in terms of hypergraphs as opposed to the pure BCM setting that was originally stated, but the restatement is straightforward. In the following we remind that a \textit{basic feasible solution} (or vertex solution) for an LP is a solution that cannot be written as a convex combination of other feasible solutions.

\begin{theorem}[Lemma 2 in \cite{DBLP:conf/mfcs/Stamoulis14}]\label{thm_degree}
Let $\mathbf{x} \in (0,1)^E$ be a basic feasible solution for the linear relaxation of the 3-uniform hypergraph matching problem. Construct the graph $L_H$ by including a hyperedge $e$ in $E(L_H)$ if $x_e > 0$. Then, there exists some vertex $v \in V(L_H) \subseteq V(H) (= V \cup N)$ such that the degree of $v$ in $L_H$ is at most 2.
\end{theorem}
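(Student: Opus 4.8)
The plan is to combine the standard rank characterization of basic feasible solutions with the special $2$-to-$1$ structure of the hyperedges: each hyperedge $\{u,v,c_j\}$ carries two ``graph'' vertices $u,v\in V$ and exactly one ``color'' vertex $c_j\in N$. Write $E'=\{e\in E_H:x_e>0\}$ for the support (which here is all of $E$, since $\mathbf{x}\in(0,1)^E$), so that $E(L_H)=E'$ and $V(L_H)=\bigcup_{e\in E'}e$. Since $\mathbf{x}$ is a vertex of the polytope $\{A\mathbf{x}\le\mathbf{1},\ \mathbf{x}\ge\mathbf{0}\}$, after discarding the now-slack non-negativity constraints on the support the tight vertex constraints, restricted to the columns indexed by $E'$, must have full column rank $|E'|$. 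In particular the set $T$ of tight vertices satisfies $T\subseteq V(L_H)$ (a vertex can only be tight if it meets a positive edge), and the incidence submatrix with rows $T$ and columns $E'$ has rank exactly $|E'|$.

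The key observation, and the place where I would use the two-graph-one-color structure, is that the rows of the \emph{full} incidence matrix $B$ of $L_H$ (rows indexed by $V(L_H)$, columns by $E'$) are linearly dependent. Define $\mathbf{z}\in\mathbb{R}^{V(L_H)}$ by $z_v=1$ for $v\in V(L_H)\cap V$ and $z_v=-2$ for $v\in V(L_H)\cap N$. For every hyperedge $e=\{u,v,c_j\}\in E'$ one has $z_u+z_v+z_{c_j}=1+1-2=0$, hence $\mathbf{z}^{\mathsf{T}}B=\mathbf{0}$; moreover $\mathbf{z}\neq\mathbf{0}$, because every hyperedge contains graph vertices and so $V(L_H)\cap V\neq\emptyset$. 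Therefore $\operatorname{rank}(B)\le |V(L_H)|-1$.

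I would then finish by contradiction. Suppose every vertex of $L_H$ has degree at least $3$. Counting incidences via $\sum_{v\in V(L_H)}\deg_{L_H}(v)=3|E'|$ (each hyperedge has three vertices) gives $3|V(L_H)|\le 3|E'|$, i.e.\ $|V(L_H)|\le|E'|$. Since the tight-row submatrix is a row-submatrix of $B$, its rank is at most $\operatorname{rank}(B)\le|V(L_H)|-1\le|E'|-1$. But the basic-solution property forced this rank to equal $|E'|$, which is the contradiction. Hence some vertex of $L_H$ has degree at most $2$.

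The main obstacle is conceptual rather than computational, and it is worth flagging why the structural hypothesis is indispensable. For a generic $3$-uniform hypergraph the statement is false: the Fano plane is $3$-regular with nonsingular incidence matrix, and $\mathbf{x}\equiv\tfrac{1}{3}$ is a basic solution in which every vertex has degree $3$. The vector $\mathbf{z}$ built from the weights $(1,1,-2)$ on the $V$- and $N$-parts is exactly the certificate that excludes such examples in our setting, and producing it is the one genuinely structural step; the remaining ingredients are the rank characterization of basic feasible solutions and a one-line incidence count. I would therefore take care to state and apply the rank characterization correctly and to verify $T\subseteq V(L_H)$, after which the argument closes immediately.
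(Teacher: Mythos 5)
Your proof is correct. Note first that there is nothing in the paper to compare it against: the statement is imported verbatim as Lemma~2 of \cite{DBLP:conf/mfcs/Stamoulis14}, and the paper gives no proof of it. Your argument is the natural (and, in essence, the standard) one for this lemma, and every step is justified: strict fractionality kills all tight non-negativity and upper-bound constraints, so a basic feasible solution needs $|E'|$ linearly independent tight \emph{vertex} constraints; the weight vector $\mathbf{z}$ (value $+1$ on $V$, value $-2$ on $N$) is a left null vector of the incidence matrix $B$ of $L_H$ because every hyperedge contains exactly two graph vertices and one color vertex, giving $\operatorname{rank}(B)\le |V(L_H)|-1$; and the all-degrees-at-least-$3$ assumption forces $|V(L_H)|\le |E'|$ by double counting, which contradicts the rank requirement since the tight rows are a row-submatrix of $B$. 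You also verify the easily overlooked point $T\subseteq V(L_H)$. Your Fano-plane remark is accurate as well: its $7\times 7$ incidence matrix $N$ satisfies $NN^{\mathsf T}=2I+J$ and is therefore nonsingular, so $\mathbf{x}\equiv\nicefrac{1}{3}$ is a basic feasible solution of a $3$-regular instance; this correctly isolates the two-graph-vertices-one-color-vertex structure as the indispensable ingredient, and it is consistent with the paper's own discussion of (truncated) projective planes as the integrality-gap obstructions.
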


In other words, basic feasible solutions are \emph{sparse}. Indeed, we can form a basic feasible solution by selecting $|E| = |E(L_H)|$ linearly independent constraints from our linear program, setting them to equality, and solving the linear system. The above result simply says that the number of non-zero variables (corresponding to edges in $L_H$) is equal to the number of linearly independent constraints set to equality. The assumption that $x_e \in (0,1)$ implies that all constraints that we set to equality are vertex constraints, but not non-negativity constraints. See \cite{schrijver, iterative_book} for more details.  We will critically exploit this fact in the following.

\medskip
\noindent{\textbf{Some Notation:}} Before we move on to the proof, we set up some notation. Let $v$ be any vertex of $H$ (in fact, of $L_H$). Denote by $E(v)$ the set of edges that contain $v$ i.e., $E(v) = \{ e \in E(L_H): v \in e \}$. Also, denote by $H(e)$ the hypergraph that is obtained by removing edge $e$ and all edges $e'$ that intersect with $e$ i.e., the hypergraph with  edge set $H(e) = \{ e' \in E(H) \mbox{ such that } e \cap e' = \emptyset \}$. According to Theorem \ref{thm_degree}, we can always find a vertex $v$ of degree at most 2 in $L_H$. Let $e_1$ and $e_2$ be these two edges, with non-zero fractional value $x_{e_1}, x_{e_2}$ respectively, incident on $v$ in $L_H$ and let $H(e_i)$, $i = 1,2$, be the sub-hypergraph obtained by removing $e_i$  and all edges intersecting with this edge. For any $u \in V(L_H) \setminus \{v\}$ let $\delta_v(u) \in \{0,1,2 \}$ be the degree of $u$ in $H_v = (V(H), E(v))$ i.e., the degree of $u$ in the subgraph consisting of the two edges in $E(v)$.

\begin{proof}[Proof of Theorem \ref{main_thm}] With the above notation and relevant results, we will prove the claim of the theorem by an inductive argument on the cardinality of $\mu$.

\medskip
\noindent{\textbf{Base Case:}} For the base case of the induction, assume that $\mu = 1$ (and, of course, $q$ can be at most 1).  It is immediate to see that in this case $H(e_1), H(e_2)$ are both the empty graphs: if not, then we can always choose two independent edges for a matching size $\mu =2$, one edge from $H(e_i)$ and then one edge among $e_1, e_2$ and all other edges that intersect them.  We will construct a feasible solution for the dual LP $D(\mathcal{HM}_c)$ as follows: for every $u \in V(L(H))\setminus \{v\}$ put $y_u = \nicefrac{\delta_v(u)}{2}$. We first claim that this is a feasible solution i.e., satisfies all constraints $\sum_{u \in e} y_u \geq 1$ for all hyperedges $e$. For the base case we need to prove the claim only for $e \in E(v)$ and any other hyperedge that intersect either $e_1$ or $e_2$  (since $H(e_i), i=1,2$ is empty in this case). It is easy to see that for any such edge
$$\sum_{u \in e } y_u =  \frac{1}{2} \sum_{u \in (e\setminus \{v\})} \delta_v(u) \geq \frac{1}{2} \cdot 2 =1.$$

We have used the fact that for each of the two edges in $E(v)$, the remaining vertices in each edge have degree in $H_v$ at least 1. For edges $e \notin E(v)$ we use the fact that, in case $\mu=1$, such edges intersect both $e_1, e_2$  incident on vertex $v$: if that was not the case, then there would be vertices of degree 1 (since $H(e_i) = \emptyset$, for $i=1,2$), a contradiction. In other words, if there was an edge $e \notin E(v)$ that intersects exactly one of $e_1, e_2$,  say intersects only $e_1$, then $e_2 \cup e$ gives a matching of size 2 since $e_2 \cap e = \emptyset$ by assumption which gives a contradiction that there exists an edge that intersects exactly one of the edges in $H(v)$.  Then, we see immediately that the above constraint is satisfied in this case as well.

We will now compute the value of the dual LP which, by duality, will give an upper bound on the fractional solution for the hypergraph matching problem. We have that

\begin{displaymath}
y^*(H) \leq \sum_{u \in V(H)} y_u = \frac{1}{2} \sum_{v \in V(H) \setminus \{v\}} \delta_v(u) \leq \frac{1}{2} \cdot 4 = 2.
\end{displaymath}

We will now show that for the base case ($\mu = 1$), if $q = 0$ then $y^*(H) \leq \nicefrac{3}{2}$ for bipartite and $y^*(H) \leq \nicefrac{5}{3}$ for non-bipartite graphs which will complete the proof for the base case. The fact that $\mu = 1$ means that all edges are pairwise intersecting either on a common ``color" vertex $c_j$ or on a common vertex of the normal graph $G$ (or both). This means that either all edges have the same color, or $G$ is the star graph, or $G$ is a (possibly heterochromatic) triangle (pairwise vertex intersection). In all cases, it is immediate by a simple search to see that the maximum possible fractional value we can get is $\nicefrac{5}{3}$ for general graphs and $\nicefrac{3}{2}$ for bipartite graphs, see Figure \ref{five_thirds}.

\begin{figure}[h!]
\label{five_thirds}
\centering
\includegraphics[scale=0.80]{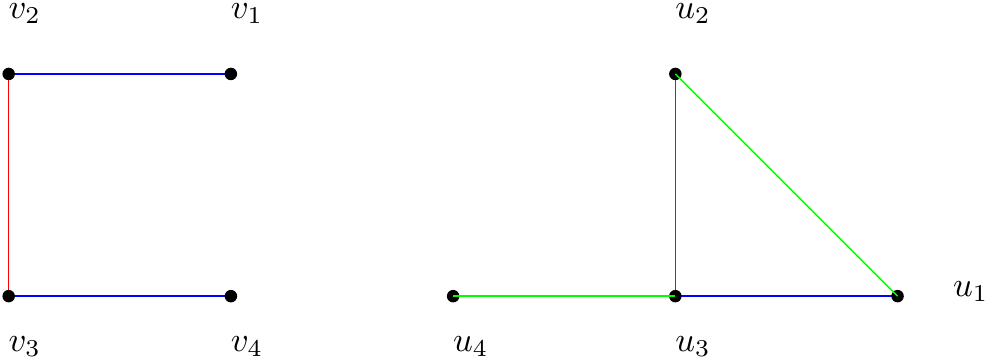}
\caption{Two graphs with matching size $\mu = 1$ and with $q=0$ that achieve the bounds of the base case. On the left, a bipartite graph where edges $\{v_1,v_2\}, \{v_3, v_4\}$ are blue and the other edge is red. By assigning value $\nicefrac{1}{2}$ to each edge we get optimal fractional value of $\nicefrac{3}{2}$ vs. value 1 in the integral case. On the right side we have a graph which is not bipartite: edges $\{u_1, u_2\},  \{ u_3,u_4\}$ are green, edge $\{u_1, u_3\}$ is blue and the remaining edge $\{u_2,u_3\}$ is red. By assigning value $\nicefrac{2}{3}$ to $\{u_1,u_2\}$ and value $\nicefrac{1}{3}$ to each other edge, we get an optimal feasible fractional solution of value $\nicefrac{5}{3}$ vs. value 1 in the integral case.}
\label{five_thirds}
\end{figure}

\noindent \textbf{Inductive Step:} Now, assume that $G$ is an instance for the Rainbow Matching problem that has a matching of size $\mu$ and has $q$ disjoint copies of $BC_H$. Let $\mathbf{x} \in [0,1]^E$ be a basic feasible fractional solution of $\mathcal{M}_c$ for this instance and let $L_H$ be the restriction of the hypergraph representation of $G$ with respect to $\mathbf{x}$. We know from Theorem \ref{thm_degree} that $L_H$ must have a vertex $v$ with degree (at most) 2 and let $e_1, e_2$ be these two edges incident to $v$. Consider the two sub-hypergraphs $H(e_1)$ and $H(e_2)$ where $H(e_i), i=1,2$ is constructed by removing $e_i$ and all edges intersecting with $e_i$ from $L_H$. These two sub-hypergraphs induce two new (fractional) sub-instances for the underlying restricted matching problem.

In order to use the inductive hypothesis, we need to prove the following easy but important claim:

\begin{claim}
The restrictions $\mathbf{x}_i$ of $\mathbf{x}$ to the edges induced by $H(e_i), i \in \{1,2\}$ are still basic feasible solutions for the corresponding restricted instances of $\mathcal{HM}_c$.
\end{claim}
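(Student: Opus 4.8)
The statement has two parts: that each $\mathbf{x}_i$ is \emph{feasible} for the relaxation $\mathcal{HM}_c$ restricted to $H(e_i)$, and that it is moreover \emph{basic} (a vertex). Feasibility is immediate and needs no structural input: the edge set of $H(e_i)$ is a subset of that of $H$, so for every vertex $u$ the restricted constraint $\sum_{e' \in E(H(e_i)):\, u \in e'} x_{e'} \le \sum_{e' \in E(H):\, u \in e'} x_{e'} \le 1$ holds, since discarding edges can only decrease the load at a vertex. The content of the claim is therefore the basic-ness of $\mathbf{x}_i$, and this is where I would concentrate the argument.

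For basic-ness I would use the perturbation characterisation of vertices: a feasible point is basic if and only if there is no nonzero direction along which one may move in both senses while staying feasible. Write $S = \{e : x_e > 0\}$ for the support of $\mathbf{x}$ and $S_i = S \cap E(H(e_i))$ for the surviving support. Assume, for contradiction, that $\mathbf{x}_i$ is \emph{not} basic in $H(e_i)$; then there is a nonzero $\mathbf{d}$, supported on $S_i$, such that $\mathbf{x}_i \pm \varepsilon\mathbf{d}$ are both feasible in $H(e_i)$ for all small $\varepsilon>0$. Since every coordinate in $S_i$ is strictly positive the box constraints pose no obstruction, so the defining property of $\mathbf{d}$ is that $\sum_{g \in E(H(e_i)):\, u \in g} d_g = 0$ at every vertex $u$ whose residual constraint is tight at $\mathbf{x}_i$. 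I would then extend $\mathbf{d}$ to the full edge set by $\tilde d_e = d_e$ on $S_i$ and $\tilde d_e = 0$ elsewhere, and try to conclude that $\mathbf{x} \pm \varepsilon\tilde{\mathbf{d}}$ are feasible for the original instance $H$, contradicting the fact that $\mathbf{x}$ is a vertex. The box constraints are again harmless for small $\varepsilon$, so it remains only to check, at every vertex $u$ that is tight in $H$, that the induced change in load $\sum_{g \in S_i:\, u \in g} d_g$ vanishes.

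The main obstacle is precisely this last verification, and it sorts the tight vertices of $H$ into three groups. First, every vertex of $e_i$ is incident to no surviving edge (by definition every edge of $H(e_i)$ is disjoint from $e_i$), so its load change is trivially zero. Second, a vertex whose residual constraint is also tight in $H(e_i)$ contributes zero by the defining property of $\mathbf{d}$. The difficulty is the third, \emph{boundary} group: a vertex $u \notin e_i$ incident both to a surviving support edge and to a deleted edge of positive value is tight in $H$ but \emph{slack} in $H(e_i)$ (its load dropped when the incident edge was removed), so nothing so far forces $\sum_{g \in S_i:\, u \in g} d_g = 0$, and the extension may break the constraint at $u$. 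In matrix terms, $\mathbf{x}$ being basic says the incidence columns of the tight constraints over $S$ are linearly independent, a property preserved when the columns outside $S_i$ are deleted; the trouble is that passing to $H(e_i)$ also \emph{deletes rows} (the boundary vertices lose their tightness), and deleting rows can drop the column rank. Resolving this is the crux: I would argue that the boundary constraints are redundant, i.e.\ lie in the span of the surviving tight constraints, exploiting the two-type (graph-vertex / colour-vertex) structure of the rainbow hypergraph, the fact that all surviving edges avoid $e_i$, and the sparsity furnished by Theorem~\ref{thm_degree} (every basic support contains a vertex of degree at most two). I expect this redundancy step, rather than the bookkeeping around it, to carry the entire weight of the proof.
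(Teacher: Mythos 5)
Your route is the same one the paper takes: argue the contrapositive by taking a witness that $\mathbf{x}_i$ fails to be a vertex of the restricted polytope (for you a two-sided feasible direction $\mathbf{d}$ supported on $S_i$, for the paper a convex decomposition $\mathbf{x}_i=\alpha\mathbf{z}_1+(1-\alpha)\mathbf{z}_2$; the two are interchangeable via $\mathbf{d}=\mathbf{z}_1-\mathbf{z}_2$), extend it trivially to the full edge set, and contradict the extremality of $\mathbf{x}$. The difference is that you stop exactly at the step you yourself call the crux and do not prove it, so as written the proposal is incomplete. The obstacle you isolate is real: a vertex $u\notin e_i$ that is tight in $H$ but carries positive load on deleted edges is slack for the restricted constraints with right-hand side $1$, so nothing forces $\sum_{g\in S_i:\,u\in g}d_g=0$ there and $\mathbf{x}+\varepsilon\tilde{\mathbf{d}}$ can overflow at $u$. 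Your proposed repair --- that these boundary rows, restricted to the columns $S_i$, lie in the span of the surviving tight rows --- is the part I do not believe can be pushed through as stated: deleting rows of the tight-constraint incidence matrix genuinely can drop its column rank, and neither the two-type structure of the rainbow hypergraph nor the degree bound of Theorem~\ref{thm_degree} obviously prevents this.

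The clean way to close the gap is not a redundancy argument but a change in what ``the restricted instance'' means: give each vertex $u$ of $H(e_i)$ the \emph{residual} capacity $1-\sum_{f\ni u,\,f\notin E(H(e_i))}x_f$ instead of $1$. Then $u$ is tight in the restricted instance precisely when it is tight in $H$, your direction $\mathbf{d}$ conserves load at every boundary vertex by definition, the zero-extension $\tilde{\mathbf{d}}$ is a two-sided feasible direction at $\mathbf{x}$, and the contradiction is immediate with no auxiliary claim. This is also the reading under which the paper's own one-line proof (which never mentions the boundary vertices and simply adds $\mathbf{x}_e$ back to $\mathbf{z}_1,\mathbf{z}_2$) is sound: without residual capacities the extended points $\mathbf{z}_j+\mathbf{x}_e$ need not be feasible for $H$, for exactly the reason you identified. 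So your diagnosis of where the difficulty sits is sharper than the paper's exposition, but the cure you propose points in the wrong direction, and the argument as submitted does not yet constitute a proof.
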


\begin{proof}
This proof follows an easy pattern but we keep it here for the sake of completeness. Indeed, if they were not, then $x$ would not have been a basic feasible solution for $L_H$ in the first place, since  then it could be written as a convex combination by the corresponding convex decompositions of the restricted vectors of $\mathbf{x}$ in a straightforward way: Assume that for $\alpha \in (0,1)$, $\mathbf{x}_i = \alpha \mathbf{z}_1 + (1-\alpha) \mathbf{z}_2$. Then $\mathbf{x} = \alpha \mathbf{z}_1 + (1-\alpha) \mathbf{z}_2 + \mathbf{x}_e$ where $\mathbf{x}_e \in (0,1)^E$ is simply the restriction of $\mathbf{x}$ to $e_1$ and all edges intersecting it, and has zero on indices of all the other edges (that are in $H(e_1))$. A contradiction.
\end{proof}

This means that we can apply the inductive hypothesis on $H(e_1)$ and $H(e_2)$ separately: each one of them has a matching of size at most $\mu-1$ (since, in each, we have removed an edge and all its intersecting edges which means that we can always add back at least one extra edge to reach $\mu$, the size of matching in $H$) and each one has at most $q$ disjoint copies of $BC_H$. In the following, we will prove the claim only for the case of general graphs but completely identical arguments hold for the bipartite case as well. In our analysis below we will distinguish between two cases: (1) both $H(e_i)$ have a matching of size exactly $\mu-1$, or (2) at least one of $H(e_1)$ has a matching of size $\leq \mu-2$. The first case simply means that there is no edge $e$ that intersects exactly one of $e_1, e_2$. So, in this case, for any edge $e \notin \{e_1, e_2\}$ we have that either $e$ does not intersect any of them or intersects both of them.

We start with the second case. We will define a fractional dual solution $y$ for $L_H$ as follows: $y(v) = 0$ and for any $u\neq v \in V(H)$
\begin{eqnarray}\label{eqn}
y(u) = \frac{1}{2} \Big( \delta_v(u) + y_1(u) + y_2(u) \Big)
\end{eqnarray}

\noindent where the existence of each $\mathbf{y}_i$ is guaranteed by inductive hypothesis.
\begin{claim}
The solution $y$ defined above is a feasible dual solution for the Rainbow Matching problem instance.
\end{claim}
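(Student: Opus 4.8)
The plan is to verify the defining dual inequality $\sum_{u \in e} y(u) \ge 1$ for every hyperedge $e$ that needs to be covered, classifying edges by how they meet the two edges $e_1,e_2$ incident to $v$. First I would record the elementary facts that drive the whole argument: since $\delta_v, y_1, y_2 \ge 0$ we have $y(u) \ge \tfrac12\,\delta_v(u) \ge 0$ and $y(u) \ge \tfrac12\,y_i(u)$ for $i \in \{1,2\}$, while the apex vertex contributes nothing because $v \in e_1 \cap e_2$ forces $y(v)=0$. It is enough to check the inequality for the edges of the support $L_H$, i.e.\ the edges with $x_e>0$: covering exactly these suffices to conclude $\sum_u y(u) \ge \textsf{value}(\mathbf{x}) = y^*(H)$ via the weak-duality chain $\sum_e x_e \le \sum_e x_e \sum_{u\in e} y(u) = \sum_u y(u)\sum_{e\ni u} x_e \le \sum_u y(u)$, where the first inequality uses feasibility on the support and the last uses that $\mathbf{x}$ is feasible and $y \ge 0$. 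Crucially, because $v$ has degree exactly $2$ in $L_H$, the only support edges through $v$ are $e_1$ and $e_2$ themselves, which rules out the degenerate configuration (a support edge $e \neq e_1,e_2$ passing through $v$) that could otherwise leave the inequality uncovered.

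Next I would dispose of the edges that miss at least one of $e_1,e_2$, where the inductive feasibility of $y_1,y_2$ does the work. If $e$ meets neither, then $e$ survives in both $H(e_1)$ and $H(e_2)$, so $\sum_{u\in e} y_1(u) \ge 1$ and $\sum_{u\in e} y_2(u) \ge 1$, giving $\sum_{u\in e} y(u) \ge \tfrac12(1+1)=1$. If $e$ misses $e_1$ but meets $e_2$ (the other case being symmetric), then $e \in H(e_1)$ yields $\sum_{u\in e} y_1(u) \ge 1$, and the vertex in $e\cap e_2$ cannot be $v$ — otherwise $v \in e\cap e_1$, contradicting that $e$ misses $e_1$ — so it has $\delta_v \ge 1$ and $\sum_{u\in e}\delta_v(u)\ge 1$; adding the two half-contributions again gives $\sum_{u\in e} y(u) \ge \tfrac12(1+1)=1$.

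The main work, and the step I expect to be the crux, is the case where $e$ meets both $e_1$ and $e_2$: here neither $y_1$ nor $y_2$ is available, so the entire inequality must come from the degree term, and I must show $\sum_{u\in e}\delta_v(u) \ge 2$. For a support edge $e$ with $v\notin e$ I would pick $p\in e\cap e_1$ and $q\in e\cap e_2$ with $p,q\ne v$; if $p\ne q$ then each contributes $\delta_v \ge 1$ and the sum is $\ge 2$, whereas if $p=q$ this common vertex lies in $e_1\cap e_2$ and so has $\delta_v=2$. The only remaining support edges through $v$ are $e_1$ and $e_2$ themselves, and for these the two non-$v$ vertices each have $\delta_v \ge 1$, again summing to $2$. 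In every case $\sum_{u\in e} y(u) \ge \tfrac12\cdot 2 = 1$, establishing feasibility; the bipartite variant follows by the identical case analysis. The delicate points to get right are the bookkeeping that the degree bound $\deg_{L_H}(v)\le 2$ forbids a support edge other than $e_1,e_2$ from passing through $v$, the observation that an intersection forced to occur at $v$ actually forces intersection with \emph{both} $e_1$ and $e_2$, and the repeated, legitimate discarding of the $y_1,y_2$ or $\delta_v$ terms that are unavailable in a given case, which relies on their non-negativity.
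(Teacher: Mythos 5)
Your proof is correct and follows essentially the same route as the paper's: the identical case split on whether a support edge meets both, one, or neither of $e_1,e_2$, using the degree term $\delta_v$ for intersections and the inductively obtained duals $y_1,y_2$ for the sub-hypergraphs $H(e_1),H(e_2)$. You additionally spell out two details the paper leaves implicit --- why $\sum_{u\in e}\delta_v(u)\ge 2$ when $e$ meets both $e_1$ and $e_2$, and why feasibility on the support $L_H$ suffices --- which is a welcome tightening but not a different argument.
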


\begin{proof}
To show that this is indeed a feasible dual solution for our instance, we need to show that $\sum_{u \in e} y(u) \geq 1$ for all edges $e$ of $L_H$ (since the non-negativity constraint is trivially satisfied).  This is indeed true for any edge $e \in H_v$ since,  as before in the base case, for any such edge we have that  $\sum_{u \in e} y(u) =   \nicefrac{1}{2} \sum_{u \in e} \delta_v(u) \geq \nicefrac{1}{2} \cdot 2 = 1$.  For all other edges $e \in E(L_H) \setminus E(H_v)$ we have that
\begin{displaymath}
\sum_{u \in e} y(u) =  \frac{1}{2} \sum_{u \in e} \Big( \delta_v(u) + y_1(u) + y_2(u) \Big)
\end{displaymath}
\noindent If edge $e$ intersects both $e_1$ and $e_2$ (the edges incident to the degree-2 vertex $v$ in $V(L_H)$ the existence of which is guaranteed by the properties of basic feasible solutions) then $\sum_{u \in e} \delta_v(u) \geq 2$ and the above expression is $\geq 1$ as required. If $\sum_{u \in e} \delta_v(u) = 0$ this means that $e$  intersects neither $e_1$ nor $e_2$ and thus it belongs to both sub-hypergraphs $H(e_1)$ and $H(e_2)$. This further means that for this edge the corresponding constraint is satisfied by both $y_1$ and $y_2$ i.e., $\sum_{u \in e} y_i(u) \geq 1$, $i = 1,2$ and thus the whole expression above is again $\geq 1$. The only remaining case is when $e$ intersects only one of $e_1, e_2$ in one vertex i.e., the case where $\sum_{u \in e} \delta_v(u) = 1$. Without any loss let us assume that $e$ intersects $e_1$ only. This means that $e \notin E(H(e_1))$ but since $e$ does not intersect $e_2$ we have that $e \in E(H(e_2))$ and as such the constraint is satisfied by the dual solution $y_2$ i.e., $\sum_{u\in e} y_2(u) \geq 1$. This, together with the fact that $\sum_{u \in e} \delta_v(u) = 1$ proves that $y(u) \geq 1$ as required.
\end{proof}

To finish the proof, we will give a bound on the fractional dual solution value which, by duality theory, gives an upper bound on the fractional value (and hence integrality gap) for the fractional Rainbow Matching problem. We remind that we are in the case where at least one of the $H(e_i)$ has a matching of size at most $\mu-2$, let this be $H(e_1)$. We apply the inductive hypothesis on $\mathbf{y}_i$ and  we have that
\begin{eqnarray*}
y^* & = & \sum_{v \in V(L_H)} y(u) = \frac{1}{2} \Bigg( \sum_{u \in V(L_H)\setminus \{v\}} \delta_v(u) + y_1(u) + y_2(u) \Bigg) \\
& = & \frac{1}{2} \Bigg( \sum_{u \in V(L_H)\setminus \{v\}} \delta_v(u) + \sum_{u \in V(L_H)\setminus \{v\}} y_1(u) +  \sum_{u \in V(L_H)\setminus \{v\}} y_2(u) \Bigg) \\
& \leq & \frac{1}{2} \Big( 4 + \frac{5(\mu-2)}{3} + \frac{q}{3} + \frac{5(\mu-1)}{3} + \frac{q}{3} \Big) \\
& = & \frac{5\mu}{3} + \frac{q}{3} - \frac{1}{2} \\
& < & \frac{5\mu}{3} + \frac{q}{3}
\end{eqnarray*}
\noindent as desired.

We now move to the first case: both $H(e_i)$ have a matching of size exactly $\mu-1$ which implies that there is no edge $e$ that intersects exactly one of $e_1, e_2$ at one endpoint (say $e_1$): if there was such an edge then the graph that is induced by edge $e_1$ and all other edges that intersect it would have matching size of $2$ since $e$ and $e_2$ are independent which implies that the matching size of the original instance is $\mu+1+2 > \mu$. Let $R(e_i)$ be the set of edges that intersect $e_i$ (including $e_i$). We know by assumption that each $R(e_i)$ has a matching of size exactly 1. Again, we will distinguish between two cases: (1.a) Neither of $R(e_i)$ is isomorphic to a $BC_H$, and (1.b) both of them are (since there are no edges intersecting exactly 1 of $e_1, e_2$ implies that either both are isomorphic to $BC_H$ or none is).

We start with case (1.a) and we use the inductive hypothesis (in fact the base case since the matching size is one) on $R(e_i)$. This case tells us that there exists a dual solution $\mathbf{y}_{R_i}$ for the  vertices in $R(e_i)$ with value at most $\nicefrac{5}{3}$ and by inductive hypothesis there exist dual solutions $\mathbf{y}_i$ with the desired properties. Define a new dual solution vector $\mathbf{y}$ for $H$ as follows:
$$y(u) = \frac{1}{2} \Big( y_{R_1}(u) + y_{R_2}(u) + y_1(u) + y_2(u) \Big).$$

\begin{claim}
The solution $y$ defined above constitutes a feasible dual solution for the Rainbow Matching problem instance.
\end{claim}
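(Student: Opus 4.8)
The plan is to verify the only non-trivial requirement for a feasible dual, namely that the covering inequality $\sum_{u \in e} y(u) \ge 1$ holds for \emph{every} hyperedge $e$ of $L_H$; non-negativity of $y$ is automatic, since each of $y_{R_1}, y_{R_2}, y_1, y_2$ is non-negative (extend each tacitly by zero on the vertices lying outside its own sub-instance). The engine of the whole argument is the structural fact that we are in case (1): since both $H(e_1)$ and $H(e_2)$ attain matching number exactly $\mu-1$, \emph{no} hyperedge of $L_H$ can meet exactly one of $e_1, e_2$. Consequently every hyperedge $e$ is of exactly one of two types, and I would split the verification accordingly: either $e$ intersects both $e_1$ and $e_2$, or $e$ intersects neither.

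For an edge $e$ of the first type (which includes $e_1$ and $e_2$ themselves, as they share the degree-two vertex $v$), the definition of $R(\cdot)$ places $e$ in both $R(e_1)$ and $R(e_2)$. Because $y_{R_1}$ and $y_{R_2}$ are feasible duals for the instances $R(e_1)$ and $R(e_2)$ produced by the base case, each already covers $e$, so $\sum_{u \in e} y_{R_i}(u) \ge 1$ for $i = 1,2$; discarding the non-negative $y_1, y_2$ contributions yields $\sum_{u \in e} y(u) \ge \tfrac12(1+1) = 1$. For an edge $e$ of the second type, the definition of $H(e_i)$ puts $e$ simultaneously in $H(e_1)$ and $H(e_2)$, so the inductively supplied duals give $\sum_{u \in e} y_i(u) \ge 1$ for $i = 1,2$, and dropping the $y_{R_1}, y_{R_2}$ terms again gives $\sum_{u \in e} y(u) \ge 1$. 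This exhausts all edges.

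The step I would be most careful about is the dichotomy itself --- that the case-(1) hypothesis genuinely rules out an edge touching exactly one of $e_1, e_2$ --- which is precisely the observation made just before the claim (such an edge, being disjoint from the other of $e_1, e_2$, would force a matching of size two inside a single $R(e_i)$ and hence a matching of size $\mu+1$ overall, a contradiction). The remaining care is purely bookkeeping: fixing the zero-extension convention once, so that the four duals (each defined on a different vertex set) can be added coordinatewise, which is harmless here because every inequality used is a one-sided lower bound among non-negative quantities. With feasibility established, the accompanying value estimate $y^*(H) \le \tfrac{5\mu}{3} + \tfrac{q}{3}$ (respectively $\tfrac{3\mu}{2} + \tfrac{q}{2}$ in the bipartite case) will then follow by summing $y$ over all vertices and inserting the base-case bound $\tfrac53$ (respectively $\tfrac32$) for each of $y_{R_1}, y_{R_2}$ together with the inductive bounds for $y_1, y_2$, using that $R(e_i) \not\cong BC_H$ keeps the number of disjoint $BC_H$ copies in each $H(e_i)$ at most $q$.
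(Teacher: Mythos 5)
Your proof is correct and takes essentially the same route as the paper's: both rest on the case-(1) dichotomy that every hyperedge meets either both of $e_1,e_2$ (hence lies in both $R(e_1)$ and $R(e_2)$ and is covered by $y_{R_1}$ and $y_{R_2}$) or neither (hence lies in both $H(e_1)$ and $H(e_2)$ and is covered by $y_1$ and $y_2$), so the averaged sum is at least $\tfrac12(1+1)=1$. You merely spell out explicitly the zero-extension convention and the factor-$\tfrac12$ bookkeeping that the paper leaves implicit.
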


\begin{proof}
As before, all edges in $R(e_i)$ are satisfied by the dual solution by construction. This includes the edges that intersect both $e_1$ and $e_2$. The only remaining edges are those that intersect neither $e_1$ nor $e_2$ and identical arguments to the previous case (2) can be applied here as well.
\end{proof}

We finish this case by providing a bound on the fractional dual solution value where again we apply the inductive hypothesis on $\mathbf{y}_i$ and  we have that
\begin{eqnarray*}
y^* & = & \sum_{v \in V(L_H)} y(u) =   \frac{1}{2} \Bigg( \sum_{u \in V(L_H)}  y_{R_1}(u) + y_{R_2}(u) + y_1(u) + y_2(u) \Bigg) \\
& \leq & \frac{1}{2} \Big( \frac{5}{3} + \frac{5}{3}+  \frac{5(\mu-1)}{3} + \frac{q}{3} + \frac{5(\mu-1)}{3} + \frac{q}{3} \Big) \\
& = & \frac{5\mu}{3} + \frac{q}{3}
\end{eqnarray*}

\noindent and this concludes the proof of case (1.a). For case (1.b) where both $R(e_i)$ are isomorphic to $BC_H$  we use the solution vector defined in Equation (\ref{eqn}) above. With identical arguments we see that this is a feasible solution vector whose value (using, once more, the inductive hypothesis and the fact that both $H(e_i)$ can have at most $q$ disjoint copies of $BC_H$) is
\begin{eqnarray*}
y^* & = & \sum_{v \in V(L_H)} y(u) = \frac{1}{2} \Bigg( \sum_{u \in V(L_H)\setminus \{v\}} \delta_v(u) + y_1(u) + y_2(u) \Bigg) \\
& = & \frac{1}{2} \Bigg( \sum_{u \in V(L_H)\setminus \{v\}} \delta_v(u) + \sum_{u \in V(L_H)\setminus \{v\}} y_1(u) +  \sum_{u \in V(L_H)\setminus \{v\}} y_2(u) \Bigg) \\
& \leq & \frac{1}{2} \Big( 4 +  \frac{5(\mu-1)}{3} + \frac{q-1}{3} + \frac{5(\mu-1)}{3} + \frac{q-1}{3} \Big) \\
& = & \frac{5\mu}{3} + \frac{q}{3} + 2 - \frac{5}{3} - \frac{1}{3} \\
& = & \frac{5\mu}{3} + \frac{q}{3}.
\end{eqnarray*}

The case where the underlying graph is bipartite can be handled by completely identical arguments, changing only the bounds.

We finish the proof by briefly comment the case where the vertex $v$ guaranteed by Theorem \ref{thm_degree} has degree 1 (the case zero is of no importance). Let $e_1$ be the unique edge guaranteed by Theorem \ref{thm_degree} incident to vertex $v$ and let $H(e_1)$ be the sub-hypergraph resulting from the removal of $e_1$ and all other edges that intersect it. As before, we apply the inductive hypothesis on $H(e_1)$ which has a matching of size $\mu-1$ and, of course, at most $q$ disjoint copies of $BC_H$. This means that we are in a sub-case of case (1) above where we dealt with the case that both $H(e_1), H(e_2)$ each have a matching of size  exactly $\mu-1$. In this case, $H(e_2)$ is simply the empty graph. We use the adjusted equation \ref{eqn} as above where now $\mathbf{y}_2$ is simply the null (zero) vector and the $\nicefrac{1}{2}$ factor is no longer needed. In other words, we define for any $u \neq v$ $y(u) = d_v(u)+y_1(u)$. Notice that in this case $d_v(u) \in \{0,1\}$. The proof that the proposed vector is indeed a feasible vector follows from the proof of the Claim preceding equation \ref{eqn}. If edge $e$ intersects $e_1$, then indeed we have that $\sum_{u \in e} \delta_v(u) \geq 1$ since in this case $\delta_v(u) = 1$. If $e$ does not intersect $e_1$ then $\sum_{u\in e}y_1(u) \geq 1$ by induction, and we are done. The bound on the fractional dual solution value provided by $y$ is identical to previous cases.
\end{proof}

The above upper bound suggests that the fewer pairwise disjoint truncated projective planes we have in our input graph (more precisely: in its hypergraph representation) the closer to $\nicefrac{3}{2}$ the integrality gap gets and the more we have, the closer to 2 we get - which we know is an upper bound on the integrality gap, achievable by \cite{DBLP:conf/mfcs/Stamoulis14, DBLP:conf/waoa/ParekhP14}.

\subsection{Algorithmic Implications}
One very natural and immediate question is if, and how, the above arguments can be exploited and turned into an explicit algorithmic construction achieving the corresponding bounds.  The natural approach would be to study the cases of BCM where instances are constrained to have only few of these bi-chromatic cycles. Another approach is to add the following linear constraint in the natural LP relaxation of the problem:

\begin{displaymath}
\sum_{e \in BC} x_e \leq 1, ~~~\forall \textrm{ bi-chromatic $4$-cycle } BC.
\end{displaymath}

We call these constraints as ``bi-chromatic constraints". By doing this, we explicitly require that the fractional value assigned to each such bi-chromatic cycle is reduced to 1 (instead of 2, which is the source of the bad integrality gap).  Theorem \ref{main_thm} suggests that the integrality gap of the new enhanced LP would be bounded by $\nicefrac{3}{2}$ in bipartite graphs and $\nicefrac{5}{3}$ in general graphs. However, it is not clear what a rounding procedure would be that can output an integral solution achieving these bounds. The ``vanilla" rounding approach would give again an $\nicefrac{1}{2}$-approximation guarantee.

We note that a very similar idea was pursued in \cite{DBLP:journals/mp/ChanL12} where the authors defined an analogous LP for the 3-Hypergraph Matching problem (maximum matching in 3 uniform hypergraphs). For this problem, the standard LP-based approach gives an approximation guarantee of $\nicefrac{3}{7}$ and, as previously mentioned, F\"{u}redi \cite{DBLP:journals/combinatorica/Furedi81} proved that the so-called \textit{Fano} plane is the only structure that forces the LP to achieve this bound: the Fano plane, the 3-uniform projective plane, achieves integrality gap of exactly $\nicefrac{7}{3}$. Given this, Chan Li \& Lau considered the \textit{Fano} LP: for every Fano plane, add the constraint that the sum of the values of the variables corresponding to edges of the Fano plane is at most 1 (they actually used, for technical reasons, the weaker version that this sum should be at most 2). They proved that the new Fano-LP should have improved integrality gap from $\nicefrac{7}{3}$ to 2. Unfortunately, they did not provide arguments that could make this algorithmic.

We believe that such an algorithmic question requires further insights on the structure of the enhanced LP beyond the ``sparsity" of basic feasible solutions - a property which is not clearly extended to the enhanced LP. Such insights would lead to an appropriate rounding procedure that eludes us at the moment. One promising road could be by noticing that the results of the previous section have also implications on the performance of the Sherali-Adams hierarchy on the natural LP relaxation of the BCM problem. The reason is that these bi-chromatic cycles have very simple structure and we expect that the Sherali-Adams hierarchy might be able to ``recognize" them after few rounds. After all, $r$ rounds of the Sherali-Adams (and other related lift-and-project techniques) can generate all valid ``local" constraints on $r$ variables, so we would expect that 4 rounds should be enough to imply the above set of constraints ($\sum_{e \in BC} x_e \leq 1$, $\forall$ bi-chromatic $4$-cycle BC.) Here we will show that this claim is actually true. For completeness, we will include the details of a slightly strengthened version of this result.

\begin{lemma}\label{lemma:bi-chromatic}
All the bi-chromatic constraints, for every bi-chromatic 4-cycle BC, are implied after 2 rounds of the Sherali-Adams hierarchy applied to the natural LP relaxation of the BCM problem.
\end{lemma}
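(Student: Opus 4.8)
The plan is to give an explicit derivation, living entirely at the second Sherali--Adams level, of the single inequality $x_1+x_2+x_3+x_4\le 1$ for a fixed bi-chromatic $4$-cycle, and then to observe that the derivation is uniform over all such cycles. Label the cycle on vertices $a,b,c,d$ by $e_1=\{a,b\}$, $e_2=\{b,c\}$, $e_3=\{c,d\}$, $e_4=\{d,a\}$, with $e_1,e_3$ of one color and $e_2,e_4$ of the other, both color bounds equal to $1$ (this is the standing Rainbow-Matching assumption of this section, and it is exactly what makes the constraint valid). Write $x_i$ for $x_{e_i}$ and $y_Z$ for the lifted variables. The constraints of $\mathcal{M}_c$ available at level $0$ are the four degree constraints $x_1+x_4\le1$, $x_1+x_2\le1$, $x_2+x_3\le1$, $x_3+x_4\le1$ (at $a,b,c,d$) and the two color constraints $x_1+x_3\le1$, $x_2+x_4\le1$, together with the box constraints $0\le x_i\le 1$.

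\textbf{Step 1 (conflicting pairs vanish).} I would first show $y_{ij}=0$ whenever $e_i,e_j$ share a vertex or a color. For a vertex-sharing pair such as $\{1,2\}$, applying \textsf{SA}-1 to $1-x_1-x_2\ge0$ with $\Gamma=\{1\}$, $\Delta=\emptyset$ gives $x_1-x_1^2-x_1x_2\ge 0$; after the substitution $x_1^2\mapsto y_{\{1\}}$ of \textsf{SA}-3 this collapses to $-y_{12}\ge 0$, and combined with $y_{12}\ge0$ (\textsf{SA}-2 with $\Gamma=\{1,2\}$) it forces $y_{12}=0$. The identical computation on the color constraint $1-x_1-x_3\ge 0$ yields $y_{13}=0$, and here the bound being $1$ is essential. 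In this way I obtain $y_{13}=y_{14}=y_{23}=y_{24}=y_{34}=0$ (the pair $\{1,2\}$ is also zero but will not be needed).

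\textbf{Step 2 (the main two-factor lift).} Next I apply \textsf{SA}-1 to the degree constraint at $b$, namely $1-x_1-x_2\ge 0$, with $\Gamma=\emptyset$ and $\Delta=\{3,4\}$, which is admissible since $|\Gamma|+|\Delta|=2$. Expanding $(1-x_1-x_2)(1-x_3)(1-x_4)$ and linearizing gives
\[
1-(y_1+y_2+y_3+y_4)+(y_{13}+y_{14}+y_{23}+y_{24}+y_{34})-y_{134}-y_{234}\ge 0.
\]
Substituting the five vanishing pairs from Step 1 collapses this to $1-(y_1+y_2+y_3+y_4)-y_{134}-y_{234}\ge 0$. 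Finally, to control the two surviving degree-$3$ monomials I bound them from below: \textsf{SA}-1 applied to the box constraint $x_4\ge0$ with $\Gamma=\{1,3\}$ gives $y_{134}\ge0$, and applied to $x_3\ge0$ with $\Gamma=\{2,4\}$ gives $y_{234}\ge0$ (both admissible, $|\Gamma|=2$). Hence $y_1+y_2+y_3+y_4\le 1-y_{134}-y_{234}\le 1$, and since $y_{\{i\}}=x_i$ after the projection \textsf{SA}-4 this is precisely $\sum_{e\in BC}x_e\le1$. As the cycle was arbitrary, the lemma follows.

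The step I expect to be the main obstacle is the appearance of the degree-$3$ monomials $y_{134},y_{234}$ produced by the two-factor lift: one must be careful that under the convention used here (a $\psi$-round lift multiplies a linear constraint by at most $\psi$ factors, hence creates monomials of degree up to $\psi+1$) these variables genuinely exist at level $2$, and that their non-negativity is itself a \emph{level-$2$} consequence — obtained by multiplying a box non-negativity constraint by a degree-$2$ monomial, rather than by invoking \textsf{SA}-2 directly, which would illegally require $|\Gamma|=3$. Everything else is bookkeeping: verifying that each auxiliary product respects $|\Gamma|+|\Delta|\le 2$ and that the $x_i^2\mapsto y_{\{i\}}$ linearizations are carried out correctly.
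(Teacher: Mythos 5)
Your derivation is correct and follows essentially the same route as the paper's: first force all pairwise products of conflicting edges to vanish at level $2$ via the $x_i^2\mapsto y_{\{i\}}$ squaring trick, then expand one two-edge constraint against $(1-x)(1-x)$ of the remaining two edges and collapse (you seed the expansion with a degree constraint where the paper seeds with a color constraint, which is immaterial by the symmetry of the $4$-cycle). The one point where you are actually more careful than the paper is the treatment of the surviving degree-$3$ monomials $y_{134},y_{234}$: you correctly observe that their non-negativity at level $2$ must come from \textsf{SA}-1 applied to a box constraint with $|\Gamma|=2$ (invoking \textsf{SA}-2 directly would need round $3$), whereas the paper drops these terms without comment.
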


\begin{proof}
Take a bi-chromatic 4-cycle with edges $e_1,e_2,e_3,e_4$ such that, w.l.o.g., $e_1, e_2$ are painted blue and $e_3, e_4$ red. Thus we have the following two color constraints: $x_{e_1}+ x_{e_2} \leq 1$ and $x_{e_3}+x_{e_4} \leq 1$. At the second round, the Sherali-Adams hierarchy will multiply both the right and the left side of these constraints (according to rule SA-1) with products of two variables. Among (many) others, we have the following constraint
$$ (x_{e_1} + x_{e_2})(1-x_{e_3})(1-x_{e_4}) \leq  (1-x_{e_3})(1-x_{e_4}),$$

\noindent which is equivalent to
$$x_{e_1} + x_{e_2}+x_{e_3}+x_{e_4} \leq 1+x_{e_1}x_{e_3} + x_{e_1}x_{e_4} + x_{e_2}x_{e_3} + x_{e_2}x_{e_4} + x_{e_3}x_{e_4} - x_{e_1}x_{e_3}x_{e_4} - x_{e_2}x_{e_3}x_{e_4} ~~~(A).$$

The product of variables $x_i x_j$ will be simulated/substituted by $y_{ij}$ (rule SA-3) and is, by definition, $\geq 0$ (rule SA-2). Another set of constraints that will be added is the following:
$$(x_{e_1}+x_{e_2})x_{e_1} \leq x_{e_1} ,$$

\noindent from which, using the rule $x_i^2 = x_i$, we get that $x_{e_1}x_{e_2} \leq 0$ which means that $x_{e_1}x_{e_2} = 0$ since the variable is non-negative. Similarly, by multiplying the constraint $x_{e_3}+x_{e_4} \leq 1$ with $x_{e_3}$ we can deduce that $x_{e_3}x_{e_4} = 0$. Now, we work with the degree constraints. For each vertex of this bi-chromatic 4-cycle we have one degree constraint. Take the vertex where edges $e_1, e_3$ meet, and let us name it $v_1$. We have that $x_{e_1}+x_{e_3}+\alpha \leq 1$, where $\alpha$ is simply a term (sum of variables) corresponding the rest of the edges incident to $v_1$. Multiplying this constraint with $x_{e_3}$ we get that $x_{e_1}x_{e_3} + \alpha_{e_3}x_{e_3} \leq 0 \Rightarrow x_{e_1}x_{e_3} = 0$. Identically, we get that $x_{e_1}x_{e_4} = x_{e_2}x_{e_3} = x_{e_2}x_{e_4} = 0$.  This means that inequality (A) from above becomes
$$x_{e_1} + x_{e_2}+x_{e_3}+x_{e_4} \leq 1 - x_{e_1}x_{e_3}x_{e_4} - x_{e_2}x_{e_3}x_{e_4} \leq 1, $$
i.e., 2 rounds of the Sherali-Adams hierarchy imply all the bi-chromatic 4-cycle constraints.
\end{proof}

Combining Lemma \ref{lemma:bi-chromatic} with Theorem \ref{main_thm} gives the following:

\begin{theorem}
Let $G$ be a graph with $q \in \mathbb{Z}^{\geq 0}$ disjoint copies of bi-chromatic 4-cycles. After 2 rounds of the Sherali-Adams hierarchy applied to the natural LP relaxation of the BCM problem, the integrality gap is at most $\nicefrac{5}{3}$ for general graphs and at most $\nicefrac{3}{2}$ for bipartite graphs.
\end{theorem}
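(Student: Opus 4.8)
The plan is to deduce the statement by sandwiching the two-round Sherali--Adams polytope between the natural relaxation and an \emph{enhanced} relaxation whose integrality gap is already controlled by Theorem~\ref{main_thm}. Concretely, let $P_{\mathrm{SA}}$ denote the projection onto the original edge variables of the two-round Sherali--Adams relaxation of $\mathcal{M}_c$, and let $P_{\mathrm{enh}}$ be the natural relaxation $\mathcal{HM}_c$ augmented with every bi-chromatic constraint $\sum_{e\in BC}x_e\le 1$. By Lemma~\ref{lemma:bi-chromatic} every point of $P_{\mathrm{SA}}$ satisfies all bi-chromatic constraints, and it certainly satisfies the original vertex and color constraints; hence $P_{\mathrm{SA}}\subseteq P_{\mathrm{enh}}$. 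Since the three polytopes $P_{\mathrm{SA}}$, $P_{\mathrm{enh}}$ and $\mathcal{HM}_c$ contain exactly the same integral points (the feasible matchings), their integral optimum is the matching number $\mu$, and the integrality gap of $P_{\mathrm{SA}}$ is at most that of $P_{\mathrm{enh}}$. Thus it suffices to show $\mathrm{OPT}(P_{\mathrm{enh}})\le \nicefrac{5\mu}{3}$ for general graphs and $\le\nicefrac{3\mu}{2}$ for bipartite graphs.

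To bound $\mathrm{OPT}(P_{\mathrm{enh}})$ I would exhibit a feasible solution of its LP dual of the required value. This dual keeps a variable $y_v$ for every vertex of the $3$-uniform hypergraph and, in addition, introduces one variable $z_{BC}\ge 0$ for every bi-chromatic constraint, the edge constraint becoming $\sum_{v\in e}y_v+\sum_{BC\ni e}z_{BC}\ge 1$. The key quantitative gain is that a whole bi-chromatic cycle can now be covered at cost $1$ by setting the single variable $z_{BC}=1$, rather than at cost $2$ as forced by the vertex duals in Theorem~\ref{main_thm}. I would build such a dual by re-running, essentially verbatim, the inductive construction in the proof of Theorem~\ref{main_thm}: at each step I apply Theorem~\ref{thm_degree} to a basic feasible solution of the \emph{natural} LP of the current sub-hypergraph to locate a vertex of degree at most two, peel the incident edges $e_1,e_2$, and recurse on $H(e_1),H(e_2)$. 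Crucially, the sparsity engine is still invoked only on the natural relaxation, where Theorem~\ref{thm_degree} is valid; the bi-chromatic variables appear only in the dual we are assembling, and setting them nonnegatively can never destroy feasibility of an edge constraint. So the objection flagged in the text---that sparsity of basic solutions need not survive the addition of bi-chromatic constraints---does not arise, because we never take a basic solution of $P_{\mathrm{enh}}$ itself.

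The single modification to the construction is in the case where the two local structures $R(e_1),R(e_2)$ peeled at the degree-two vertex are themselves bi-chromatic cycles (case~(1.b) of Theorem~\ref{main_thm}). There, instead of paying for them through the $\nicefrac{\delta_v(u)}{2}$ contribution---precisely the step that injected the additive $\nicefrac{q}{3}$ (resp.\ $\nicefrac{q}{2}$) term---I set $z_{R(e_1)}=z_{R(e_2)}=1$, covering all their edges at total cost $2$, and recurse on the edges that meet \emph{neither} cycle. All remaining cases (the base case, the degree-one case, case~(2) and case~(1.a)) are copied unchanged, but now invoke the \emph{$q$-free} inductive bounds $\nicefrac{5(\mu-1)}{3}$ and $\nicefrac{5}{3}$; a routine recomputation of each recurrence then returns exactly $\nicefrac{5\mu}{3}$ (respectively $\nicefrac{3\mu}{2}$), with the $q$-dependence gone. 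Feasibility of the assembled dual is checked exactly as before: edges inside a peeled cycle are covered by the corresponding $z$, edges meeting neither $e_1$ nor $e_2$ are covered by the recursive dual, and in case~(1.b) there are (by hypothesis) no edges meeting exactly one of $e_1,e_2$.

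The part I expect to be genuinely delicate is making case~(1.b) yield \emph{exactly} $\nicefrac{5\mu}{3}$ rather than $\nicefrac{5\mu}{3}+\nicefrac13$. Covering the two cycles costs $2$, so to absorb this cost the recursion must drop the matching parameter by two, i.e.\ I must argue that the sub-hypergraph of edges meeting neither cycle has matching number at most $\mu-2$. This amounts to showing that the two bi-chromatic cycles hanging off the common degree-two vertex are sufficiently independent to free up two edges of a maximum matching simultaneously; combined with the cheap $z$-cover this is what cancels the cost exactly. Establishing this matching drop---rather than the weaker $\mu-1$ that is immediate from each $R(e_i)$ having matching number one---is the crux, while everything else is bookkeeping mirroring Theorem~\ref{main_thm}. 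The bipartite statement follows by the identical argument with the constants $\nicefrac{3\mu}{2}$ and $\nicefrac{q}{2}$ in place of $\nicefrac{5\mu}{3}$ and $\nicefrac{q}{3}$.
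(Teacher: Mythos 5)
Your overall strategy --- sandwich the two-round Sherali--Adams polytope inside the LP enhanced with all bi-chromatic constraints via Lemma~\ref{lemma:bi-chromatic}, and then bound the enhanced LP by re-running the dual construction of Theorem~\ref{main_thm} with an extra dual variable $z_{BC}$ per bi-chromatic constraint --- is exactly the combination the paper intends (its own proof is the single sentence ``combining Lemma~\ref{lemma:bi-chromatic} with Theorem~\ref{main_thm}''), and you spell the mechanism out in more detail than the paper does. However, the step you isolate as the crux rests on a misreading of the structure, and the resolution you propose for it is false. In case~(1) of the proof of Theorem~\ref{main_thm} there is, by hypothesis, no edge meeting exactly one of $e_1,e_2$; since $e_1$ and $e_2$ meet each other at $v$, this forces $R(e_1)=R(e_2)$. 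So in case~(1.b) you are not covering two cycles at cost $2$ but a \emph{single} copy of $BC_H$ at cost $1$ (set $z_{R(e_1)}=1$ once), and the edges meeting neither $e_1$ nor $e_2$ form exactly $H(e_1)$, whose matching number is $\mu-1$ by the case hypothesis. The recurrence then gives $1+\nicefrac{5(\mu-1)}{3}=\nicefrac{5\mu}{3}-\nicefrac{2}{3}$, with room to spare. The matching drop of $2$ that you flag as the delicate point is neither available (when $R(e_1)=R(e_2)$ the complement has matching number exactly $\mu-1$, not $\mu-2$) nor needed; as written, your accounting $2+\nicefrac{5(\mu-1)}{3}=\nicefrac{5\mu}{3}+\nicefrac{1}{3}$ overshoots the target and this case fails, but the fix is the observation above.

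A second, more structural point: your claim that the sparsity objection ``does not arise'' because Theorem~\ref{thm_degree} is only ever invoked on the natural relaxation does not actually dispose of it. The dual vector assembled by the induction is verified to cover only the edges of $L_H$, the support of the basic natural-LP solution being peeled; by weak duality this certifies an upper bound on the objective value of \emph{that} solution. To bound $\mathrm{OPT}(P_{\mathrm{enh}})$ (or the two-round SA optimum) you must control the objective at an optimal vertex of the \emph{enhanced} polytope, whose support need not be contained in the support of any basic solution of the natural LP and need not admit a degree-two vertex. So either the constructed dual must be shown feasible for every hyperedge of $H$, or a sparsity statement must be proved for vertices of $P_{\mathrm{enh}}$ itself --- precisely the difficulty the paper concedes when it says that sparsity ``is not clearly extended to the enhanced LP.'' This gap is present in the paper's own one-line proof as well, but your write-up asserts it has been circumvented when it has not.
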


This means that instead of working with the enhanced LP, we can work directly with hierarchies generated by a low number of rounds, and use promising rounding approaches.  We leave this as an open question with the hope that the results of the current manuscript will be a first step towards this direction. Given the similarity of BCM with the 3-hypergraph matching problem, we also expect that any positive result for the former could be adapted to provide a positive result for the later, solving a major open problem in the field.

\section{Acknowledgements}
The authors would like to sincerely thank the two anonymous referees and the editorial for carefully reading a first version of the manuscript and for the many helpful comments and suggestions that greatly improved its content and presentation.  The second author acknowledges the support of an NWO TOP 2 grant (617.001.301).

\bibliographystyle{abbrv}
\bibliography{references}

\end{document}